\documentclass[sigconf,nonacm,dvipsnames]{acmart}

\usepackage[a-2b]{pdfx}
\usepackage{ragged2e}
\usepackage{changepage}
\usepackage{url}
\usepackage[normalem]{ulem}
\usepackage{soul}
\usepackage{xcolor}
\usepackage{amsmath}
\usepackage{mathtools}
\usepackage{algorithmicx}
\usepackage[ruled]{algorithm}
\usepackage{algpseudocode}
\usepackage{graphicx}
\usepackage{subcaption}
\usepackage[shortlabels]{enumitem}
\usepackage{cleveref}  
\usepackage{bbm}
\usepackage{stmaryrd}
\usepackage{textcomp}
\usepackage{multicol}
\usepackage{balance}
\usepackage{float}
\usepackage{hhline}
\usepackage{tabularx,booktabs}
\usepackage{multirow}
\usepackage{ulem}

\DeclareMathOperator*{\argmin}{argmin}

\newcommand{\rr}[1]{\textcolor{Black}{#1}}

\definecolor{darkblue}{rgb}{0.0, 0.0, 0.65}
\definecolor{darkred}{rgb}{0.65, 0.0, 0.0}
\definecolor{lightred}{rgb}{0.8, 0.36, 0.36}



\theoremstyle{definition}

\newcommand{\E}[1]{\mathbb{E}\left[#1\right]}
\newcommand{\e}[1]{\mathbb{E}[#1]}

\newcommand{\se}[1]{\mathbb{SE}[#1]}

\newcommand{\var}[1]{\mathbb{V}\mathrm{ar}\left[#1\right]}
\newcommand{\vvar}[1]{\mathbb{V}\mathrm{ar}[#1]}

\renewcommand{\Pr}[1]{{\mathrm{Pr}}\!\left[#1\right]}
\newcommand{\bigO}[1]{\mathcal{O}\left(#1\right)}

\newcommand{\slfrac}[2]{\left.#1\middle/#2\right.}
\newcommand{\w}[1]{|#1|_{\scriptstyle w}}

\makeatletter
\providecommand*{\capdot}{%
	\mathbin{%
		\mathpalette\@capdot{}%
	}%
}
\newcommand*{\@capdot}[2]{%
	\ooalign{%
		$\m@th#1\cap$\cr
		\sbox0{$#1\cap$}%
		\dimen@=\ht0 %
		\sbox0{$\m@th#1\cdot$}%
		\advance\dimen@ by -\ht0 %
		\dimen@=.5\dimen@
		\hidewidth\raise\dimen@\box0\hidewidth
	}%
}

\providecommand*{\cupdot}{%
	\mathbin{%
		\mathpalette\@cupdot{}%
	}%
}
\newcommand*{\@cupdot}[2]{%
	\ooalign{%
		$\m@th#1\cup$\cr
		\sbox0{$#1\cup$}%
		\dimen@=\ht0 %
		\sbox0{$\m@th#1\cdot$}%
		\advance\dimen@ by -\ht0 %
		\dimen@=.5\dimen@
		\hidewidth\raise\dimen@\box0\hidewidth
	}%
}

\providecommand*{\bigcupdot}{%
	\mathop{%
		\vphantom{\bigcup}%
		\mathpalette\@bigcupdot{}%
	}%
}
\newcommand*{\@bigcupdot}[2]{%
	\ooalign{%
		$\m@th#1\bigcup$\cr
		\sbox0{$#1\bigcup$}%
		\dimen@=\ht0 %
		\advance\dimen@ by -\dp0 %
		\sbox0{\scalebox{2}{$\m@th#1\cdot$}}%
		\advance\dimen@ by -\ht0 %
		\dimen@=.5\dimen@
		\hidewidth\raise\dimen@\box0\hidewidth
	}%
}

\makeatother

\title{EdgeSketch: Efficient Analysis of Massive Graph Streams} 

\author{Jakub Lemiesz}
\authornote{This research was funded in whole or in part by National Science Centre, Poland, grant numbers 2023/49/B/ST6/02517 and 2023/07/X/ST6/01598.}
\affiliation{%
  \institution{Wrocław University of S \& T}
  \city{Wrocław}
  \country{Poland}
}
\email{jakub.lemiesz@pwr.edu.pl}

\author{Dingqi Yang}
\authornote{This research was funded in whole or in part by Science and Technology Development Fund, Macau SAR (0011/2025/RIB1, 001/2024/SKL).}
\affiliation{%
  \institution{University of Macau}
  \city{Macau}
  \country{China}
}
\email{dingqiyang@um.edu.mo}

\author{Philippe Cudr\'e-Mauroux}
\affiliation{%
  \institution{University of Fribourg}
  \city{Fribourg}
  \country{Switzerland}
}
\email{philippe.cudre-mauroux@unifr.ch}

\begin{document}
	
\begin{abstract}
	We introduce EdgeSketch, a compact graph representation for efficient analysis of massive graph streams. EdgeSketch provides unbiased estimators for key graph properties with controllable variance and supports implementing graph algorithms on the stored summary directly. It is constructed in a fully streaming manner, requiring a single pass over the edge stream, while offline analysis relies solely on the sketch. We evaluate the proposed approach on two representative applications: community detection via the Louvain method and graph reconstruction through node similarity estimation. Experiments demonstrate substantial memory savings and runtime improvements over both lossless representations and prior sketching approaches, while maintaining reliable accuracy.
\end{abstract}
\maketitle

\section{Introduction}
\label{sec:intro} 	
Across a growing number of domains, from social and communication networks to recommender systems, knowledge graphs, and biological data analysis, connections and relations emerge at such a massive scale that explicit graph representation becomes difficult (see e.g., \cite{UniProt,AdultBrain,BoVWFI,BRSLLP,YahooMusic,TemporalGraph}).
For massive graphs with high average node degrees, say in the order of one hundred, not only adjacency matrices but also adjacency lists require tens or hundreds of gigabytes \mbox{(see e.g. Facebook or host-level web data in \cite{law_datasets}).}
In such cases, storing the graph and executing standard graph algorithms becomes too costly, both in terms of memory and latency. 
This motivates the construction of compact, lossy graph representations that can effectively support a wide range of graph analysis tasks.

A number of papers study \emph{graph embeddings} in the offline setting, where the full graph is available for repeated access.
Methods such as DeepWalk, node2vec, LINE, GraRep, NetMF, and VERSE learn a low-dimensional vector representation for each node, designed to capture different notions of node proximity \cite{DeepWalk,node2vec,LINE,GraRep,NetMF,VERSE}. 
These embeddings support a variety of downstream tasks (e.g., link prediction or node classification), often by serving as input to machine-learning pipelines.
NodeSketch \cite{NodeSketch} fits into this line of work also: it uses recursive sketching to produce embeddings that preserve higher-order proximity, while being faster than prior embedding methods.
More recent work on SGSketch \cite{StreamingGraphEmbeddings} adapts the neighborhood-sketching idea of NodeSketch to the streaming setting with incremental updates, and therefore does not require random access to the full graph.
It processes the edge stream online, updating embeddings for impacted nodes only, and gradually forgets older edges to adapt to structural changes in the graph.
However, all these approaches focus on producing node embeddings rather than providing a universal graph representation that can directly support diverse graph algorithms.

Complementary to embeddings, \emph{graph stream summarization} studies compact data structures that approximate selected statistics of dynamic graphs within a streaming setting. Examples include gSketch, Scube, and Auxo for edge-frequency and heavy-hitter queries \cite{gSketch,Scube,Auxo}, as well as TCM and Graph Stream Sketch for reachability and topological analysis \cite{GSS,FastAndAccurate}. 
By prioritizing low memory usage and fast sketch updates, these methods are highly effective for their intended tasks.
On the other hand, as they are task-specific, they do not provide a universal per-node sketch that can be reused across a wide range of queries or graph algorithms.

In parallel, \emph{data sketches} such as ExpSketch and its more efficient variants, FastExpSketch and FastGM \cite{Lemiesz21,Lemiesz2023,FastGumbelMax}, were proposed as compact and universal summaries for streams of weighted items.
\mbox{Using $O(m)$ memory,} these sketches represent the underlying set $\mathbb{X}$ of distinct stream items and give accurate estimates of its weighted cardinality $|\mathbb{X}|_w = \sum_{x \in \mathbb{X}} w(x)$, with variance decreasing as $O(1/m)$.
Moreover, they support standard operations on sets. Namely, the sketches can be combined to estimate the weight of unions, intersections, differences, and to compute set similarity measures such as the (weighted) Jaccard index.
Although fairly universal, they are designed for unstructured collections and therefore do not capture relationships among items (e.g., temporal order in a stream or connectivity in a graph).

In this work, we bridge these various lines of research and introduce \emph{EdgeSketch}, a compact and universal graph representation built in a fully streaming manner. Conceptually, EdgeSketch inherits from NodeSketch the idea of storing, for each node, a small coordinated sample
of its incident connections, and from FastExpSketch the exponential update rule that enables accurate estimates and supports set operations. Namely, EdgeSketch uses a single pass over the edge stream by treating (possibly weighted) edges as basic stream elements and assigning each node its own sketch. Each node sketch encapsulates both aggregated neighborhood information and a weighted sample of incident edges.
Together, these complementary pieces of information enable a broad spectrum of graph-analytic tasks.
Once the edge stream has been processed, all subsequent analyses can operate entirely on EdgeSketch, without any further access to the original graph.

The rest of the paper is organized as follows.
Section \ref{sec:related} concisely reviews the sketch constructions on which EdgeSketch directly builds, namely ExpSketch, FastExpSketch, and NodeSketch.
\mbox{Section \ref{sec:SES} defines} EdgeSketch and describes its construction. We show that it is time and memory-efficient and naturally handles directed, undirected, and multi-edge graphs, with a straightforward extension to hypergraphs.
Section~\ref{sec:toolbox} develops a toolbox of operations and estimators for EdgeSketch.
We prove that EdgeSketch provides unbiased estimators for key graph parameters (node degrees, edge counts, density) as well as for more sophisticated statistics, all with analytically controlled error. 
In Section \ref{sec:applications}, as an example of a practical application of EdgeSketch, we define a graph modularity estimator and adapt the Louvain method for community detection to operate directly on sketches.
Section~\ref{sec:experiments} empirically evaluates EdgeSketch on graph modularity and graph reconstruction tasks.
The experiments allow us to compare the efficiency of EdgeSketch with NodeSketch and adjacency-list baselines.
\mbox{The presented} results show that EdgeSketch provides a compact per-node representation that supports accurate estimation of key graph statistics and serves as a building block for higher-level \mbox{algorithms.}
\mbox{Together, these} features enable analysis of massive graph streams that would otherwise be difficult or infeasible.

\section{Preliminary}
\label{sec:related} 

In this section, we briefly review the key techniques that inspired our proposed EdgeSketch. We begin with ExpSketch~\cite{Lemiesz21}, a foundational method that enables further enhancements. Next, we present FastExpSketch~\cite{Lemiesz2023}, which significantly accelerates sketch construction and makes the ExpSketch approach scalable to large datasets. We then summarize how FastExpSketch can be used to emulate standard set operations (cf. \cite{Lemiesz2023}). Finally, we recall NodeSketch~\cite{NodeSketch}, which enables the construction of node embeddings in graphs.

\subsection{Exp-Sketch and Fast-Exp-Sketch}
\label{sec:expsketch}
In this section we briefly describe ExpSketch and FastExpSketch (see:  Algorithms~\ref{alg:sketch} and \ref{alg:new}).
Consider a  stream $\mathfrak{M}$ of pairs $(i, \lambda_{i})$, which may repeat, where $i$ is an item identifier and $\lambda_{i} \in \mathbb{R}_{+}$ is its weight. Without loss of generality, for a stream with $n$ distinct items, we assume $i \in \{1, \ldots, n\}$.
The goal is to build a compact sketch using $o(n)$ memory that estimates $\Lambda = \sum_{i=1}^{n} \lambda_i$ with controlled error.

For each stream element $(i, \lambda_{i})$, ExpSketch computes $m$ hashes $h(i\|1), \ldots, h(i\|m)$, where $i\|k$ denotes concatenation. These hashes are treated as independent random variables uniformly distributed on $[0,1]$.
Next, each hash is transformed using the inverse CDF of the exponential distribution: $F^{\,-1}(u) = - \slfrac{\ln u}{\lambda_{i}}$.
By the inverse transform sampling theorem~\cite{Devr86}, each variable $-\ln h(i\|k) / \lambda_{i}$ \mbox{follows} an exponential distribution with parameter $\lambda_{i}$.
Then, we compare point-wise these  $m$ random variables with the values currently stored in sketch $\mathbf{M}$ in line \ref{alg:s} of \mbox{Algorithm \ref{alg:sketch}}.

\begin{algorithm}[t]
	
    \caption{ExpSketch (\,$\mathfrak{M}\,$, $m$ )}
	\label{alg:sketch}
	\begin{algorithmic}[1]
		
		\algrenewcommand\algorithmicrequire{\textbf{Initialization:	}} 
		\Require		
		
		\State	set each of $ m $ positions of sketch $\mathbf{M} = \left(\mathbf{M}_1,\ldots,\mathbf{M}_m\right)$ to $\infty$
		\vspace{0.3cm}	 
		\algrenewcommand\algorithmicrequire{\textbf{Upon 
				arrival of an element $(i, \lambda_{i}) \in \mathfrak{M}$ :}} 
		\Require     
		
		\ForAll{$k\in \{1,2,\ldots, m\}$ }	\label{alg:loop}
		
		\State $U \gets h\left(i \, ||\,  k\right) $ \label{alg:u}
			\State $E \gets  
		{-\slfrac{\ln U}{\lambda_{i}}} $ \label{alg:X}
		
		\State $\mathbf{M}_k \gets \min\left\{\mathbf{M}_k \,,\, E \right\}$ \label{alg:s}
		\EndFor
		
		\vspace{0.3cm}
		\algrenewcommand\algorithmicrequire{\textbf{Upon request, at any time:} }
		\Require     
		\State \textbf{return:} $\mathbf{M}$		
	\end{algorithmic}
\end{algorithm}

Let $E_{i} \sim \mathrm{Exp}(\lambda_i)$ for $i=1,\ldots,n$. After processing all $n$ elements, each sketch cell stores $\mathbf{M}_{k} = \min\{E_{1}, \ldots, E_{n}\}$.
Since the minimum of independent exponential random variables is also exponential, we have $\mathbf{M}_{k} \sim \mathrm{Exp}(\Lambda)$.
The cells $\mathbf{M}_{1}, \ldots, \mathbf{M}_{m}$ are i.i.d., so their sum $G_m = \sum_{k=1}^{m} \mathbf{M}_k$ follows the gamma distribution $G_m \sim \Gamma(m, \Lambda)$.
From the properties of the gamma distribution, it follows that
\begin{equation}
	\label{eq:sum}
	\widebar{\Lambda} = \slfrac{(m-1)}{G_m}
\end{equation}
is an unbiased estimator of $\Lambda$ for $m \geq 2$.
For $m\geq 3$, its variance and standard error are
\begin{equation}
	\label{eq:lambda_variance}
	\vvar{\widebar{\Lambda}} = \frac{\Lambda^2}{m-2}
	\quad \text{and} \quad
	\se{\widebar{\Lambda}} = \frac{1}{\sqrt{m-2}}~.
\end{equation}


	\algrenewcommand\algorithmiccomment[1]{\hfill \textcolor{gray}{$\bigO{#1}$}}

	\begin{algorithm}[t]	
		\caption{FastExpSketch (\,$\mathfrak{M}\,$, $m$ )}
		\label{alg:new}
		\begin{algorithmic}[1]		
			\algrenewcommand\algorithmicrequire{\textbf{Initialization:	}} 
			\Require	
			\State \rr{$permInit \gets (1,2,3,\ldots,m)$} 
			\State	set all values of $\mathbf{M} = \left(\mathbf{M}_1,\ldots,\mathbf{M}_m\right)$ to $\infty$ 
			\State $MAX \gets \infty$ 
			\vspace{0.3cm}	 
			\algrenewcommand\algorithmicrequire{\textbf{Upon 
					arrival of an element $(i, \lambda_{i}) \in \mathfrak{M}$ :}} 
			\Require  
			\State $sum \gets 0 $ 
			\State \rr{$updateMAX \gets false $} 
			\State \rr{$P\gets permInit$} 
			\State $SeedRandom(i)$ 		
			\ForAll{$k\in \{1,2,\ldots, m\}$ }		
			\State $U \gets h\left(i \, ||\,  k\right) $  
			\State $E \gets  
			{-\slfrac{\ln U}{\lambda_{i}}} $
			\State $sum \gets sum + \slfrac{E}{(m-k+1)} $ 
			\State \textbf{if} $sum>MAX$ \textbf{then} break 
			\State \rr{$r \gets RandomInteger([k,m]) $} 
			\State \rr{exchange $P[k]$ and $P[r]$} 
			\State $l \gets P[k] $  
			\State \rr{\textbf{if} $\mathbf{M}_{l}==MAX$ \textbf{then} $updateMAX \gets true $} 
			\State $\mathbf{M}_{l} \gets \min\left\{\mathbf{M}_{l} \,,\, sum\right\}$ 
			\EndFor
			\State \rr{\textbf{if} updateMAX \textbf{then} $MAX \gets \max\left\{\mathbf{M}_1, \ldots, \mathbf{M}_m \right\}$} 
			\vspace{0.3cm}
			\algrenewcommand\algorithmicrequire{\textbf{Upon request, at any time:} }
			\Require     
			\State \textbf{return:} $\mathbf{M}$		
		\end{algorithmic}
	\end{algorithm}

	In ExpSketch, for each stream element we generate $m$ exponential random variables $E_1, \ldots, E_m$ and update $\mathbf{M}_k \leftarrow \min\{\mathbf{M}_k, E_k\}$.
	\mbox{The key} observation is that if $E_k > \max\{\mathbf{M}_1, \ldots, \mathbf{M}_m\}$, then $E_k$ cannot affect the sketch.
	FastExpSketch exploits this by generating order statistics $E_{(1)},  \ldots, E_{(m)}$ of variables $E_1, \ldots, E_m$ incrementally, stopping as soon as $E_{(k)}$ exceeds the current sketch maximum.
	This is enabled by a well-known property of exponential order statistics~\cite{Devr86}: for i.i.d.\ $E_1, \ldots, E_m \sim \mathrm{Exp}(\lambda)$,
	\begin{equation}
		\label{eq:renyi}
		E_{(k)} \stackrel{\mathrm{d}}{=} E_{(k-1)} + \frac{E_k}{m-k+1}~.
	\end{equation}
	Since order statistics lose the original index information, FastExpSketch assigns the generated order statistics to random positions in $\mathbf{M}$.
	This yields a sketch with identical statistical properties to ExpSketch while avoiding unnecessary computation.
	
	The pseudo-code of FastExpSketch is given in Algorithm~\ref{alg:new}. For a detailed description, see~\cite{Lemiesz2023}. We highlight the key points below.
	Variable $sum$ accumulates order statistics according to~(\ref{eq:renyi}). In each iteration, if $sum > MAX = \max\{\mathbf{M}_1, \ldots, \mathbf{M}_m\}$, the loop terminates early since no further values can update the sketch.
	Otherwise, a random position $l$ is selected (lines 13--15) via a lazy Fisher–Yates shuffle~\cite{Knuth97v2}, seeded by the element identifier $i$ (line 7) to ensure consistency across repeated occurrences.
	Finally, the sketch is updated at position $l$, and $MAX$ is recomputed if necessary.

		
\subsection{Set-theory operations}
\label{sec:operations}
A key advantage of ExpSketch and FastExpSketch is their ability to support standard set operations. As shown in~\cite{Lemiesz2023}, sketches of individual sets can be combined to estimate the weighted cardinality of any set $\mathbb{X}$ constructed from sets $\mathbb{A}_1, \mathbb{A}_2, \ldots, \mathbb{A}_d$ using unions, intersections, and complements. The general scheme is as follows.

\begin{enumerate}
	
	\item Create sketches $\mathbf{A}_1,  \ldots, \mathbf{A}_d$ representing sets $\mathbb{A}_1,\ldots, \mathbb{A}_d$.
	
	\item Express set $\mathbb{X}$ with the use of sets $\mathbb{A}_1, \mathbb{A}_2, \ldots, \mathbb{A}_d$  in full disjunctive normal form to obtain an alternative of $b$ disjoint intersections for some $b\geq 1$. 
	
	\item Assume  $i$-th intersection has the form $\mathbb{A}_{1}\ldots\,\mathbb{A}_{r}\,\overline{\mathbb{A}}_{r+1} \, \ldots \, \overline{\mathbb{A}}_{d}$.
	For $i\in\{1,2, \ldots, b\}$, using the Iverson bracket, compute
	$$m'_i =  \sum_{k=1}^{m} \Bigl\llbracket \mathbf{A}_{1,k} = \ldots =  \mathbf{A}_{r,k} < \min\{\mathbf{A}_{r+1,k}, \ldots, \mathbf{A}_{d,k}\} \Bigr\rrbracket~.$$
	
	\item Estimate $\w{\mathbb{X}}$ by summing all unbiased estimates for disjoint intersections. With $m' = m'_1 + \ldots + m'_b$, define
	$$\widehat{\mathbb{X}} = \frac{m'}{m} \cdot \frac{m-1}{\sum_{k=1}^{m}\min\{\mathbf{A}_{1,k}, \ldots, \mathbf{A}_{d,k}\}}~.$$
	Then $\e{\widehat{\mathbb{X}}} = \w{\mathbb{X}}$ and $\vvar{\widehat{\mathbb{X}}} \approx \frac{\w{\mathbb{X}} \cdot \w{\Omega}}{m}$.

\end{enumerate}

\subsection{Node-Sketch}
\label{sec:NodeSketch}

NodeSketch \cite{NodeSketch} generates graph node embeddings through consistent weighted sampling on the node's neighbors. 
For a node $i$ with neighbor set $N(i)$, it produces an embedding of length $m$ by selecting
\begin{equation}
j_l = \argmin_{j \in N(i)\cup\{i\}} \frac{-\log h_l(j)}{w_{ij}}
\end{equation}
for each position $l \in \{1,\dots,m\}$, where $h_l(j) \sim \mathrm{Uniform}(0,1)$ is a position-specific hash seeded by $j$ and $w_{ij} > 0$ is the edge weight.
Note that $-\log h_l(j) / w_{ij}$ follows an exponential distribution with parameter $w_{ij}$, analogously to ExpSketch. Such node embeddings can serve as node features to support various downstream graph analysis tasks.


%

\section{Edge-Sketch}
\label{sec:SES} 

In this section, we present EdgeSketch (see Algorithms \ref{alg:StreamES} and \ref{alg:UpdateES}), which is inspired by NodeSketch and \mbox{FastExpSketch}.
The core concept of EdgeSketch is based on the observation that weights are typically associated with graph edges, making it most convenient to treat edges as basic stream elements to be aggregated in sketches.
Essentially, we consider scenarios where a graph is given as a stream of weighted edges, and storing all the edges becomes infeasible or too costly.
Therefore, the edges are sampled and aggregated in sketches, with each node $i$ having its own associated sketch $\mathbf{M}_i$.
In the following sections, we show that this approach enables significant memory and time savings when analyzing large graphs.
For now, let us assume that edges in stream $\mathfrak{M}$ are directed. In Section~\ref{sec:edges}, we show how EdgeSketch handles other types of edges.

\subsection{Sketch construction}
The steps for constructing a sketch are presented in \mbox{Algorithm \ref{alg:StreamES}.} 
A sketch $\mathbf{M}_i$ for node $i$ is initialized upon the node's first appearance in the stream.
Sketch $\mathbf{M}_i$ consists of two arrays: $\mathbf{F}_i$ storing a weighted sample of edges incident to node $i$, and $\mathbf{S}_i$ containing aggregated information about all incident edges. Initially, $\mathbf{F}_i$ is filled with empty edges denoted as $(0,0)$ and $\mathbf{S}_i$ with infinity values.
The crucial step is the update procedure UpdateES on line 6 that builds upon FastExpSketch. Indeed, by comparing the pseudocodes of FastExpSketch \mbox{(see Algorithm \ref{alg:new})} and UpdateES  (see Algorithm \ref{alg:UpdateES}), one can see that they are in fact very similar. Below we focus on the key modifications and novel elements brought by EdgeSketch.

Since the stream elements are edges, each uniquely identified by its endpoints, EdgeSketch utilizes both endpoints of an edge as a seed for generating random numbers (lines 6 and 8 in Algorithm \ref{alg:UpdateES}). 
To ensure consistent processing of the same edge across all sketches, the order of endpoints is enforced on line 5 of Algorithm \ref{alg:UpdateES}.

A key enhancement of \mbox{EdgeSketch} over \mbox{FastExpSketch} is the introduction of array $\mathbf{F}_i$ for each node $i$, which complements \mbox{array $\mathbf{S}_i$.} Array $\mathbf{F}_i$ maintains a weighted sample of edges incident \mbox{to node $i$} (see lines 15-18 of Algorithm \ref{alg:UpdateES}) and mirrors NodeSketch's approach, where each node stores a weighted sample of its neighbors. 
Observe that the edge \((i,j)\) is inserted into \(\mathbf{F}_{i,l}\) each time the corresponding value \(\mathbf{S}_{i,l}\) is modified.
Thus, according to the FastExpSketch design and from the basic property of the exponential distribution (cf. \cite{Lemiesz2023}), after processing all edges connected to node \(i\), the probability that edge \((i,j)\) is stored in \(\mathbf{F}_{i,l}\) is given by:
\begin{equation}
	\label{eq:fil}
	\Pr{\mathbf{F}_{i,l} = (i,j)} = \slfrac{w_{ij}}{deg_i} \;\;\;\text{where}\;\;\;
	deg_i =\sum\nolimits_{k\in V}w_{ik} 
\end{equation}
denotes the total weight of all edges \mbox{incident to node \(i\).}
The introduction of array \(\mathbf{F}_i\),  though seemingly minor, significantly impacts EdgeSketch functionality.
Arrays \(\mathbf{F}_i\) and \(\mathbf{S}_i\) provide complementary information, each serving distinct purposes. 
Namely, array \(\mathbf{F}_i\) provides information about
 specific connections, while \(\mathbf{S}_i\) offers aggregated information about all connections and their weights.
For instance, the aggregated data in \(\mathbf{S}_i\) can be used to estimate node degrees (see Section~\ref{sec:degrees}), evaluate graph density (see Section~\ref{sec:density}), and, most importantly, facilitate set-theoretic operations on sets of nodes (see Section~\ref{sec:oper}). 
On the other hand,  \(\mathbf{F}_i\) can be used to identify specific neighboring nodes (e.g., for community formation, simulating random walks) and for creating node embeddings as is done in NodeSketch.

The integration of information from array $\mathbf{F}_i$ (similar to the one in NodeSketch) and $\mathbf{S}_i$ (inherited from FastExpSketch) enables the execution of a variety of graph algorithms that are not feasible with NodeSketch or 
FastExpSketch alone. For example, by simulating set-theory operations on $\mathbf{S}_i$, one can create sketches representing sets of vertices (subgraphs).
Then, based on the sample of edges contained in $\mathbf{F}_i$, one can estimate how many edges in a subgraph are internal \mbox{(i.e., both endpoints are in a subgraph)} and how many are external \mbox{(i.e., only one endpoint is in a subgraph)}.
With this crucial feature, EdgeSketch can be used to search for densely connected structures like cliques and communities, and to compute key metrics such as minimal cuts and modularity \mbox{(see Section \ref{sec:applications}).}
In general, taken together, the arrays \(\mathbf{F}_i\) and \(\mathbf{S}_i\)  constitute a comprehensive representation of the graph structure.

\begin{algorithm}[t]	
	\caption{EdgeSketch(\,$\mathfrak{M}\,$, $m$ )} 
	\label{alg:StreamES}
	\begin{algorithmic}[1]
		\algrenewcommand\algorithmicrequire{\textbf{Upon arrival of an edge $(i, j) \in \mathfrak{M}$ with weight $w_{ij}$ :}} 
		\Require
		\algrenewcommand\algorithmicrequire{\textbf{Initialization:	}}
		\Require	 
		\vspace{0.3cm}		 		
		\If{sketch $\mathbf{M}_i = (\mathbf{F}_i,\mathbf{S}_i)$ does not exist}
		\State	set all positions of 1st row   $\mathbf{F}_i = \left(\mathbf{F}_{i,1},\ldots,\mathbf{F}_{i,m}\right)$ to \textcolor{black}{$(0,0)$} 
		\State	set all positions of 2nd row   $\mathbf{S}_i = \left(\mathbf{S}_{i,1},\ldots,\mathbf{S}_{i,m}\right)$ to $\infty$
		\State set  variable  $MAX_i \gets \infty$ 
		\EndIf	
		
		\algrenewcommand\algorithmicrequire{\textbf{Sketch update:	}}
		\Require	
		\vspace{0.3cm}	
		\State  $\mathbf{M}_i , MAX_i =$ UpdateES$(\mathbf{M}_i , MAX_i, \textcolor{black}{(i,j)}, w_{ij})$

		\algrenewcommand\algorithmicrequire{\textbf{Upon request, at any time:} }
		\Require
		\vspace{0.3cm}     
		\State \textbf{return} assemble of all sketches $\mathbf{M} = \left(\mathbf{M}_1, \ldots, \mathbf{M}_n\right)$
				
	\end{algorithmic}
\end{algorithm}

\algrenewcommand\algorithmiccomment[1]{\hfill \textcolor{gray}{//#1}}

\begin{algorithm}[t]	
	\caption{UpdateES ($\mathbf{M}_i$, $MAX_i$, \textcolor{black}{$(i,j)$}, $w_{ij}$ )}
	\label{alg:UpdateES}
	\begin{algorithmic}[1]		
		\algrenewcommand\algorithmicrequire{\textbf{Initialization:	}} 
		\Require
		\State $(\mathbf{F_i},\mathbf{S_i}) \gets \mathbf{M}_i$
		\State $P \gets (1,2,3,\ldots,m)$ 
		\State $sum \gets 0 $ 
		\State $updateMAX \gets false $
		\State \textcolor{black}{$i,j \gets \min(i,j), \max(i,j)$} 		
		\State \textcolor{black}{$SeedRandom(i||j)$}	\Comment{seed for \it{RandomInteger}}
					
		\algrenewcommand\algorithmicrequire{\textbf{Update:	}} 
		\Require
		\vspace{0.3cm}	
		\ForAll{$k\in \{1,2,\ldots, m\}$ }		
		\State \textcolor{black}{$U \gets h\left(i \, ||\, j \, ||\,  k\right) $} 
		\Comment{edge $(i,j)$ and sketch position $k$}
		\State $E \gets  
		{-\slfrac{\ln U}{w_{ij}}} $ \label{alg:calcEEEE}
		\State $sum \gets sum + \slfrac{E}{(m-k+1)} $
		\State \textbf{if} $sum \geq MAX_i$ \textbf{then} break 
		\State $r \gets RandomInteger([k,m]) $
		\State exchange $P[k]$ and $P[r]$
		\State $l \gets P[k] $ 
		\If{$sum  < \mathbf{S}_{i,l} $}
		\State \textbf{if} $\mathbf{S}_{i,l}==MAX_i$ \textbf{then} $updateMAX \gets true $ 
		\State $\mathbf{S}_{i,l} \gets sum$
		\State \textcolor{black}{$\mathbf{F}_{i,l} \gets (i,j)$} \Comment{sample of edges}
		\EndIf
		\EndFor
		\State $ \mathbf{M}_i  \gets (\mathbf{F}_i,\mathbf{S}_i)$
		\State \rr{\textbf{if} updateMAX \textbf{then} $MAX_i \gets \max\left\{\mathbf{S}_{i,1}, \ldots, \mathbf{S}_{i,m} \right\}$} 
		\State \textbf{return} $\mathbf{M}_i, MAX_i$		
	\end{algorithmic}
\end{algorithm}

\subsection{Handling various edge types}
\label{sec:edges}

EdgeSketch can be used for both directed and undirected edges. 
For a directed edge $(v,u)$, we simply use Algorithm \ref{alg:StreamES} to either create or update the sketch for node $v$.	
For an undirected edge, we update the sketches corresponding to both its endpoints.  
Namely, for each undirected edge $\{u,v\}$ from the stream, we should treat it as both $(u,v)$ and $(v,u)$ in the sketching procedure to update sketches of both nodes $u$ and $v$. And since in both sketches we in fact store the same edge, on line 5 of Algorithm \ref{alg:UpdateES} we ensure consistent edge processing in all sketches by ordering its endpoints before adding an edge into a sketch. 
\mbox{In particular,} we note that on line 6 and line 8 of Algorithm \ref{alg:UpdateES} we use both edge endpoints to fix randomness, and their order matters. 
Naturally, we can also add self-loops $(u, u)$ when creating a sketch for node $u$.

The sketch construction enables a straightforward extension to accommodate parallel edges (i.e., multiple edges between the same pair of vertices). To achieve this, one only needs to modify the random number generation on lines~6 and~8 of Algorithm~\ref{alg:UpdateES} so that it takes as input \(i || j || t\) instead of \( i || j \), where \(t\) could be, for instance, a timestamp marking when the edge appears. This proves useful in scenarios where interactions between pairs of vertices occur repeatedly (for example, when multiple transactions take place between the same pair of addresses in a blockchain network). 
Note that the total weight of parallel edges between two nodes can be computed by intersecting their sketches, as this captures exactly the edges incident to both nodes (see Section \ref{sec:oper}).

Finally, it is worth noting that the sketch construction can also be easily extended to support hypergraphs with minimal modifications to the algorithm. Specifically, on line 18 of Algorithm \ref{alg:UpdateES}, we can store not only edges $(u,v)$ but also hyperedges $(u,v,w,\ldots)$. \mbox{This modification} requires a simple adjustment of the pseudorandom number generation on lines 6 and 8 to account for the variable number of endpoints in each hyperedge.

\subsection{Time requirement}
\label{sec:analysis} 

Apart from the initial sketch setup, the time complexity of building EdgeSketch is determined by the update step on line~6 of Algorithm~\ref{alg:StreamES}. Since UpdateES (Algorithm~\ref{alg:UpdateES}) is an adaptation of FastExpSketch, the amortized analysis from \cite{Lemiesz2023} applies: for a node $i$ with \mbox{degree $d_i$}, processing all its incident edges requires $\mathcal{O}(\log d_i)$ expected comparisons. Summing over all nodes, the total expected number of comparisons is
$\mathcal{O}\!\left(\sum_i \log d_i\right)$ and by Jensen's inequality
$$\sum_i \log d_i \le |V| \log \overline{d}, \quad \text{where} \quad \overline{d} = \tfrac{1}{|V|}\sum_i d_i = \tfrac{2|E|}{|V|}$$
is the average degree (for undirected graphs). 
The simple modifications in the pseudo-code compared to FastExpSketch, such as introducing array~$\mathbf{F}_i$, do not affect this bound. Consequently, even for large graphs, the sketch construction remains efficient.

EdgeSketch also supports parallel construction. When the edge stream is partitioned across machines or when memory is constrained, partial sketches can be built independently and merged via position-wise minima on part $\mathbf{S}_i$ with corresponding updates to part $\mathbf{F}_i$ (see Section~\ref{sec:oper}). This merge operation takes $\mathcal{O}(m)$ time per node, independent of the node's degree, and is therefore asymptotically cheaper than building the node sketch itself.

What is worth emphasizing is that, compared to traditional graph representations, using EdgeSketch significantly reduces the time needed to run 
graph algorithms while maintaining controlled accuracy of their results (see Section~\ref{sec:experiments}).  
This speedup results from the purpose-designed, compact sketch representation and the efficiency of set operations on sketches, as outlined in Section~\ref{sec:toolbox}.

\subsection{Memory requirement}
\label{sec:memory}

Regarding memory requirements, each node sketch \(\mathbf{M}_i\) consists of the arrays \(\mathbf{S}_i\) and \(\mathbf{F}_i\) of fixed length $m$. Array \(\mathbf{S}_i\) stores \(m\) floating-point numbers, and array \(\mathbf{F}_i\) stores \(m\) pairs of node identifiers.
We established that node identifiers are integers, and both floating-point numbers and integers use a 64-bit representation. 
\mbox{Therfore, storing} a single sketch \(\mathbf{M}_i\) requires \(3m\) 64-bit numbers. To store the entire graph with $|V|$ nodes, we need \(|V|\) such sketches, which takes \(|V| \times 3m  \) of 64-bit values in total. 
Note that in most applications, the parameter \(m\), which controls the precision of sketch-based estimations (as discussed in Section~\ref{sec:toolbox}), can be set to a \mbox{relatively small value.}

In comparison, representing a weighted graph using an adjacency matrix requires storing \(|V|\times |V|\) of 64-bit values (weights). 
\mbox{Alternatively,} representing the directed graph \(G = (V, E)\) with an adjacency list requires storing  \(|V| +  2 \times  |E|\) of 64-bit values. Specifically, for each node, we maintain a list of its incident edges, and for each edge, we need to store the identifier of the adjacent node and the weight of the edge. For undirected graphs, we typically store each edge twice, hence we need to store \(|V| +  4 \times  |E|\) of 64-bit values.
Therefore, the memory savings provided by EdgeSketch become evident and substantial once the edge set is sufficiently large compared to the number of nodes (cf. Section~\ref{sec:experiments}).
For example, for undirected graphs, comparing memory costs $3m|V| < |V| + 4|E|$ shows that EdgeSketch offers memory savings once $|E| > \tfrac{3m}{4}\,|V|$.

Such scenarios are ubiquitous in modern applications.
One key reason for this is the presence of dense subgraphs, which can significantly boost the total number of edges. For example, this phenomenon often occurs in social networks (e.g., described by stochastic block models \cite{holland1983sbm}), telecommunication systems and sensor networks (e.g., described by geometric graph models \cite{Penrose2003}), and biological networks like brain connectomes, which are known to contain dense \mbox{subgraphs (see e.g. \cite{AdultBrain}).} To illustrate this situation, we note that in the Erdős-Rényi random graph model with \( n \) nodes and edge probability \( p = 1/n \), the expected number of edges is \(\Theta(n ) \).
In contrast, consider a stochastic block model with \( n \) nodes divided into roughly \( n / \log n \) communities, \mbox{each of size \( \log n \).} 
\mbox{We set the probability} of connection between nodes within the same community to \( p = \Theta(1) \), and between nodes in different communities to \( q = 1/n \). 
It is easy to check that the expected number of \mbox{edges in such a random \mbox{graph is \(\Theta(n \log n) \).}}

Overall, EdgeSketch offers significant memory savings over
traditional lossless graph representations as the edge-to-node ratio grows.
Crucially, the proposed solution operates in a fully streaming mode: after processing the stream of edges once, the further offline analysis relies solely on 
the sketch.

\section{Edge-Sketch Toolbox}
\label{sec:toolbox} 
In this section, we present a toolbox of graph analysis tasks that can be performed on EdgeSketch. We begin with basic operations such as estimating node degrees and assessing graph density. We then introduce set-theoretic operations on EdgeSketches that enable more advanced analyses, including finding subgraphs, estimating the number of internal edges, and measuring pairwise node similarity. 
Building on these primitives, we demonstrate that complex graph algorithms, such as community detection, can be implemented directly on the sketch representation without accessing the original graph. 
Throughout this section, we provide theoretical guarantees on the accuracy of the resulting approximations.
For clarity and conciseness, we assume that we are dealing with an undirected graph $G=(V,E)$ in the remainder of the paper.

\subsection{Node degrees}
\label{sec:degrees}
Given a node $i$ in $G$ and its EdgeSketch $\mathbf{M}_i=(\mathbf{F}_i,\mathbf{S}_i)$, we can easily estimate its weighted degree (i.e., the sum of the weights of its incident edges).
The general estimator formula is presented in Equation (\ref{eq:sum}).
For EdgeSketch we get
\begin{equation}
	\label{eq:Wi}
	\widehat{deg}_i = \frac{m-1}{\sum_{j=1}^m \mathbf{S}_{i,j}}~.
\end{equation}
For unweighted graphs, this directly provides an estimate of the node's degree.
From Section \ref{sec:expsketch}  we know  $\widehat{deg}_i $ is an unbiased estimator of $deg_i$
and its variance is given by Equation (\ref{eq:lambda_variance})  for $m\geq 3$:
\begin{equation}
	\label{eq:deg_variance}
	\vvar{\widehat{deg}_i} = \frac{(deg_i)^2}{m-2}~.
\end{equation}
Having unbiased estimators $\widehat{deg}_i$ for all nodes $i\in\{1,2,\ldots,n\}$, we can estimate the degree distribution (e.g., by creating a histogram), 
estimate the average node degree in the graph, and identify the most central nodes (i.e., those with the highest degree).

\subsection{Number of edges and graph density}
\label{sec:density}

Knowing estimates for node degrees, we can estimate the number of edges and the density of a graph.
For example, for an undirected graph $G=(E,V)$ without self-loops, its density $d_G$ is given as the ratio of the number of edges  $|E|$ to the number of all possible edges 
${|V| \choose 2}$.
From the sketch we know the exact value of $|V|=n$, so we just need to \mbox{estimate $|E|$.} We use the fact that for undirected graphs without self-loops 
\begin{equation}
\label{eq:num_edges}
|E| = \frac{1}{2} \, vol(V)
\end{equation}
where, by convention, we define the volume $vol(C)$ of a set of nodes $C$ as the sum of the weighted degrees of the nodes in $C$:
\begin{equation}
	\label{eq:vol_def}
		vol(C) = \sum_{v \in C} \deg_v~. 
\end{equation}
Note that \mbox{$\widehat{vol}(C) = \sum_{v \in C} \widehat{deg}_v$} is an unbiased estimator of $vol(C)$ as it is a sum of unbiased estimators of $deg_v$ defined in formula (\ref{eq:Wi}). 
Finally, we define the unbiased estimator for the graph density as
	$$
	\widehat{d}_G = \frac{\widehat{vol}(V)}{n(n-1)}~.
	$$
Although the estimators $\widehat{\mathit{deg}}(v)$  need not be independent, we can bound $\vvar{\widehat{vol}(C)}$ using the  Minkowski inequality 
\cite{Klenke2014ProbabilityTheory}:
\[
\sqrt{\vvar{\widehat{vol}(C)}}
\le
\sum_{v\in C}\sqrt{\vvar{\widehat{deg}(v)}}.
\]
By Equation (\ref{eq:lambda_variance}) we have
$\vvar{\widehat{\mathit{deg}}(v)}=deg(v)^2/(m-2)$ for $m\ge 3$ and we obtain 
\[
\sqrt{\vvar{\widehat{vol}(C)}}
\le
\sum_{v\in C}\frac{deg(v)}{\sqrt{m-2}}
=
\frac{vol(C)}{\sqrt{m-2}}.
\]
Thus, we have
\[
\vvar{\widehat{vol}(C)}\le \frac{vol(C)^2}{m-2}
\qquad\text{and}\qquad
\se{\widehat{vol}(C)}\le\frac{1}{\sqrt{m-2}}~.
\]

\subsection{Operations on node sketches}
\label{sec:oper}
On node sketches $\mathbf{M}_i=(\mathbf{F}_i,\mathbf{S}_i)$, we can naturally perform set-theory operations described in Section \ref{sec:operations} by using part $\mathbf{S}_i$ of the sketches, which corresponds to FastExpSketch.
It should only be noted that in the case of EdgeSketch, $\mathbf{S}_i$  represents the set of edges with their weights.

Part $\mathbf{F}_i$ of the sketch plays no role in performing set-theory operations on sketches. 
However, the values stored in $\mathbf{F}_i$ may be copied to new sketches along with corresponding values \mbox{from $\mathbf{S}_i$.} 
\mbox{For example,} when creating a new sketch $\mathbf{M}_u=(\mathbf{F}_u,\mathbf{S}_u)$ representing the union of two nodes $i$ and $j$ based on their sketches $\mathbf{M}_i=(\mathbf{F}_i,\mathbf{S}_i)$ and $\mathbf{M}_j=(\mathbf{F}_j,\mathbf{S}_j)$, to obtain the value at position $k$ of $\mathbf{S}_u$ we take the minimum (cf. \cite{Lemiesz2023}): 
\begin{equation}
	\label{eq:Suk}
 \mathbf{S}_{u,k} =\min \left (\mathbf{S}_{i,k},\mathbf{S}_{j,k}\right) 
\end{equation}
and just copy corresponding value to $\mathbf{F}_u$:
\begin{equation}
	\label{eq:Fuk}
	 \mathbf{F}_{u,k} = \begin{cases}
	\mathbf{F}_{i,k} & \text{if } \mathbf{S}_{i,k} < \mathbf{S}_{j,k},\\
	\mathbf{F}_{j,k} & \text{otherwise.}
\end{cases}
\end{equation}
In this way, we get sketch $\mathbf{M}_{u}$, which is exactly the same as we would get by observing the union of edges incident to nodes \mbox{$i$ and $j$.}

As shown in Section \ref{sec:operations} and thoroughly discussed in \cite{Lemiesz2023}, we can evaluate any sequence of set-theory operations using \mbox{sketches $\mathbf{S}_i$.}
Therefore, as we have described above, we can also perform set-theory operations with \mbox{EdgeSketches} $\mathbf{M}_i=(\mathbf{F}_i,\mathbf{S}_i)$.
This capability opens a wide range of applications in graph analysis and graph algorithm design, which we will discuss in subsequent sections.
Here, we present three applications that showcase the usefulness of performing set-theory operations on EdgeSketches:
 
\begin{enumerate}
	\item  \textbf{Efficiently aggregate information about a given node.}
	This feature can be particularly useful when creating graph sketches independently on different machines or when the sketches span different time windows.

	\item \textbf{Merging nodes of a given graph into one super node}. 
	This feature is crucial for finding subgraphs \mbox{(see Section \ref{sec:subgraphs})}, estimating nodes similarity \mbox{(see Section \ref{sec:similarity})} and for \mbox{running} several graph algorithms, like for example \mbox{Karger's Min-Cut.}
	
	\item \textbf{Performing various set-theory operations on graphs.} 
	This feature is essential when we aim to efficiently find overlapping graphs or subgraphs (see Section \ref{sec:similarity}).
\end{enumerate}

\subsection{Finding subgraphs}
\label{sec:subgraphs}
In this section, we show that EdgeSketches may be used not only to estimate the results of set-theory operations but also to decide whether one set is a subset of another.
More specifically,  we demonstrate how to use EdgeSketches to check if a given graph is a subgraph of another graph \mbox{(in the sense that it is a subset of its edges).}

Consider two graphs represented by their sets of \mbox{edges $\mathbb{A}$ and $\mathbb{B}$.}
\mbox{Assume} $\mathbf{M}_A = (\mathbf{F}_A, \mathbf{S}_A) $ and $\mathbf{M}_B = (\mathbf{F}_B, \mathbf{S}_B) $ are the corresponding EdgeSketches and let
 \mbox{$\mathbf{S}_A = (\mathbf{A}_1, \ldots, \mathbf{A}_m)$} and \mbox{$\mathbf{S}_B = (\mathbf{B}_1, \ldots, \mathbf{B}_m)$}.
\mbox{Let} $\mathbb{S_A}$  denote the set of all values  
obtained on line \ref{alg:X} of \mbox{Algorithm \ref{alg:sketch}} for some position $k$ of  sketch  $\mathbf{S}_A$ (and analogously define $\mathbb{S_B}$ for $\mathbf{S}_B$).
 \mbox{Then, for any $k \in \{1,\ldots, m\}$ the sequence of implications holds:}
 $$ \mathbb{A\subseteq B  \Rightarrow S_A\subseteq S_B  \Rightarrow  \min\{S_A\} \geq \min\{S_B\} }  \Rightarrow   \mathbf{B}_k \leq \mathbf{A}_k ~.$$
 As we know that (cf. \cite{Lemiesz21}, Section 4.5)
 $$\Pr{\mathbf{B}_k \leq \mathbf{A}_k} = \Pr{\min\{\mathbb{S_B}\} <  \min\{\mathbb{S_A\setminus S_B}\}   } = \frac{\w{\mathbb{B}}}{\w{\mathbb{A\cup B}}} $$
 we have 
 $$\Pr{(\forall k ) (\mathbf{B}_k \leq \mathbf{A}_k)} = \Pr{\mathbf{B}_k \leq \mathbf{A}_k}^m = \left(\frac{\w{\mathbb{B}}}{\w{\mathbb{A\cup B}}}\right)^m ~.$$
Therefore,
 $$\mathbb{A\subseteq B} \;\Rightarrow\; \Pr{(\forall k ) (\mathbf{B}_k \leq \mathbf{A}_k)} =1  $$
and
  \begin{equation}
\label{eq:costam}
\mathbb{A\nsubseteq B} \; \Rightarrow\;  \Pr{(\exists k ) (\mathbf{B}_k > \mathbf{A}_k)}  = 1- \left(\frac{\w{\mathbb{B}}}{\w{\mathbb{A\cup B}}}\right)^m ~.
  \end{equation}
The conclusion is that with high probability we are able to find a proof that $ \mathbb{A\nsubseteq B}$. A proof is  an index $k$ such that $\mathbf{B}_k > \mathbf{A}_k$.
\mbox{From Equation} (\ref{eq:costam}) we can conclude that the larger part of the weight of set $\mathbb{A}$ is not contained in set $\mathbb{B}$
and the higher the value of parameter $m$, the greater the probability of finding a proof.

\subsection{Subgraph internal edges}
\label{sec:internal}

Until now, we have not discussed the practical applications of the $\mathbf{F}_i$ component of  EdgeSketch $\mathbf{M}_i = (\mathbf{F}_i, \mathbf{S}_i)$, which contains a sample of edges. For the sake of brevity, let us consider an undirected graph $G=(V,E)$.
First and foremost,  $\mathbf{F}_i$ allows us to estimate what proportion of edges is internal to a given subset of nodes. 
We say an edge is internal to a subset of nodes $C\subseteq V$ if both of its endpoints \mbox{belong to $C$.}  
\mbox{The information} about internal edges is essential for analyzing the structural properties of a graph. 
For example, this information is crucial for algorithms that detect communities, find cliques, or compute graph modularity (see Section \ref{sec:modularity}).

The total weight of edges internal to the subset of nodes $C$  
 \mbox{(induced by $C$)} and the total weight of edges incident to nodes in $C$
\begin{equation}
	\label{eq:eC}
	e(C) =  \sum_{\substack{\{v,u\} \in E \\ v, u \in C}} w_{\{v,u\}}	\;\;\;\;\;,\;\;\;\;\;
	w(C) = \sum_{\substack{\{v,u\} \in E \\ v \in C}}  w_{\{v,u\}}
\end{equation}
can be used to compute the weighted proportion of internal edges in the node set $C$ 
\begin{equation}
	\label{eq:p}
	p(C) = \frac{e(C)}{w(C)}~.
\end{equation}
We note that  $w(C)$ differs from $vol(C)$ defined in (\ref{eq:vol_def}) as $w(C)$ counts each edge only once, whereas $vol(C)$ may double-count some edges.

Below, we explain how to estimate $p(C)$, $e(C)$, and $w(C)$.
First, by iteratively computing the union of sketches for the nodes in $C$ based on formula (\ref{eq:Suk}) and (\ref{eq:Fuk}), we create sketch
$\mathbf{M}_C = (\mathbf{F}_C, \mathbf{S}_C)$ representing $C$. 
To obtain the unbiased estimator $\widehat{w}(C)$ of $w(C)$,
we simply use the estimator given in (\ref{eq:Wi}) \mbox{for the sketch $\mathbf{M}_C$}
and get
\begin{equation}
    \label{eq:varwC}
    \var{\widehat{w}(C)} = \frac{w(C)^2}{m-2} \;.
\end{equation}

Note  that array $\mathbf{F}_C$ contains a sample of $m$ edges 
incident to the nodes in $C$, and each edge ${\{u,v\}}$ in this sample is selected independently with probability $w_{\{u,v\}}/w(C)$.
\mbox{This observation can} be justified by viewing sketch $\mathbf{M}_C$ as the representation of a single super node consisting of nodes from $C$ (with possible parallel edges and self-loops) and by using \mbox{Equation (\ref{eq:fil}).}
Now we can define the indicator random variable for the \(k\)th sample edge as
\begin{equation}
	\label{eq:Ik}	
 I_k = \mathbbm{1}\{\text{the \(k\)th sample edge is internal for $C$}\}~.
\end{equation}
Since each \(I_k\) is a Bernoulli random variable with \(\mathbb{E}[I_k] = p(C)\) and \(\operatorname{Var}(I_k) = p(C)(1-p(C))\)~, an unbiased estimator for \(p(C)\) is
\begin{equation}
\label{eq:phat}	
 \hat{p}(C) = \frac{1}{m}\sum_{k=1}^{m} I_k
 \;\;\;\text{and}\;\;\;
 \var{\hat{p}(C)} = \frac{p(C)(1-p(C))}{m}.
\end{equation}

Finally, by formula (\ref{eq:p}), an estimator  for $e(C)$ 
can be obtained as  $$\widehat{e}(C) =  \widehat{w}(C) \cdot \widehat{p}(C)~. $$
Since $\widehat{w}(C)$ and $\widehat{p}(C)$ are both unbiased estimators,
to show that $\widehat{e}(C)$ is also unbiased, we will argue that they are independent.
Namely, note that $\widehat{\mathit{w}}(C)$ depends only on the sketch values in $\mathbf{S}_C$, whereas $\widehat{\mathit{p}}(C)$ depends only on the sampled edges in $\mathbf{F}_C$. By the design of EdgeSketch, different sketch positions are independent.
Moreover, for each position $k$ of the sketch, the identity of the edge selected in $\mathbf{F}_{C,k}$ is independent of the value selected in $\mathbf{S}_{C,k}$.

\begin{lemma}
	\label{lem:race-indep}
	Fix any position $k$ in the sketch, and let $E$ be a set of edges with associated weights $\{w_e\}_{e \in E}$.
	Let $T_e \sim \mathrm{Exp}(w_e)$ be the random variable generated for each edge $e$ on line 9 of Algorithm \ref{alg:UpdateES}.  
	Then,
	$\,S_{C,k} = \min_{e \in E} T_e\,$ and  $\,F_{C,k} = \arg\min_{e \in E} T_e\,$
	are independent random variables. 
\end{lemma}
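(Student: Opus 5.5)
The plan is to compute the joint law of $(S_{C,k},F_{C,k})$ directly and show that it factorizes. The $T_e$ are independent exponentials, with $T_e\sim\mathrm{Exp}(w_e)$. Write $W=\sum_{e\in E} w_e$. It suffices to show that for every edge $e\in E$ and every $t\ge 0$,
$$\Pr{F_{C,k}=e,\; S_{C,k}>t} \;=\; \frac{w_e}{W}\, e^{-Wt}.$$
Taking $t=0$ then gives $\Pr{F_{C,k}=e}=w_e/W$, which is consistent with Equation~(\ref{eq:fil}). Summing over $e$ gives $\Pr{S_{C,k}>t}=e^{-Wt}$, so $S_{C,k}\sim\mathrm{Exp}(W)$. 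Hence the joint probability equals the product of the marginals. Since events of the form $\{S_{C,k}>t\}$ generate the $\sigma$-algebra of $S_{C,k}$ and $F_{C,k}$ is discrete, this yields independence. Ties among the $T_e$ have probability zero, so $\arg\min$ is almost surely well defined.

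The key computation conditions on $T_e=s$:
$$\Pr{F_{C,k}=e,\;S_{C,k}>t}=\int_t^\infty w_e e^{-w_e s}\prod_{f\ne e}\Pr{T_f>s}\,ds=\int_t^\infty w_e e^{-W s}\,ds=\frac{w_e}{W}e^{-Wt}.$$
This uses only independence of the $T_f$ and the exponential tail $\Pr{T_f>s}=e^{-w_f s}$. This is the classical ``exponential race'' argument.

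The main obstacle is not this calculation but justifying that the model in the lemma matches what Algorithm~\ref{alg:UpdateES} actually stores. That algorithm does not literally compute $T_e$ per position. Instead, it generates order statistics through the R\'enyi representation~(\ref{eq:renyi}) and scatters them to positions by a seeded Fisher--Yates shuffle, with early termination. I would handle this by invoking the equivalence stated in Section~\ref{sec:expsketch}: FastExpSketch produces a sketch with the same distribution as ExpSketch. In other words, for each edge the multiset of values placed on positions $1,\dots,m$ is distributed as $m$ i.i.d.\ $\mathrm{Exp}(w_e)$ variables assigned to positions. Across edges these variables are independent, since the hashing is seeded by the distinct edge identifiers.

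Early termination never discards a value that could have become the minimum. Likewise, $\mathbf{F}_{i,l}$ is overwritten exactly when $\mathbf{S}_{i,l}$ decreases. Together these mean that the final pair at position $k$ is precisely $(\min_e T_e,\arg\min_e T_e)$. For the merged sketch $\mathbf{M}_C$, the union rules~(\ref{eq:Suk}) and~(\ref{eq:Fuk}) preserve this structure: the minimum over a union is the minimum of the minima, and $\mathbf{F}$ follows the argmin. The same edge appearing in several node sketches carries an identical $T_e$, by the ordering on line 5, so it counts once in $E$. With this reduction in place, the computation above completes the proof.
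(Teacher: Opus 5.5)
Your proof is correct and is essentially the paper's own argument: condition on $T_e=s$, integrate $w_e e^{-w_e s}$ against the product of the other exponential tails to get $\frac{w_e}{W}e^{-Wt}$, and observe that this factors into the two marginals. Your extra reduction, from the actual UpdateES mechanics (R\'enyi order statistics, seeded shuffle, early termination, and the union rules) to the idealized exponential-race model, is sound. It goes beyond the paper, which simply takes that model as the hypothesis of the lemma.
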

\begin{proof}
	Let $ W=\sum_{e\in E} w_e$.
	Then based on standard properties of exponential distribution $S_{C,k}\sim \mathrm{Exp}(W)$ and $\mathbb{P}(F_{C,k}=e)=w_e/W$.
	For any $e\in E$ and $t\ge 0$, conditioning on $T_e=s$ gives
	\begin{align*}
		\mathbb{P}(F_{C,k}=e,\ S_{C,k}>t)
		&=\int_t^\infty \mathbb{P}\left(\bigwedge_{e'\neq e} T_{e'} > s\right) f_{T_e}(s)\,ds \\
		&=\int_t^\infty \left(\prod_{e'\neq e} e^{-w_{e'} s}\right)(w_e e^{-w_e s})\,ds \\
		&=\int_t^\infty w_e e^{-Ws}\,ds
		=\frac{w_e}{W}e^{-Wt} \\
		&=\mathbb{P}(F_{C,k}=e)\,\mathbb{P}(S_{C,k}>t).
	\end{align*}
\end{proof}
Lemma~\ref{lem:race-indep} shows that, while edge weights strongly affect which edge appears in $\mathbf{F}_{C,k}$, the sketch value stored in $\mathbf{S}_{C,k}$ does not carry additional information about that identity.

To find the variance of the estimator $\widehat{e}(C)$, recall that for independent random variables $X$ and $Y$
\begin{equation}
\label{eq:var_product}
\var{XY} = \var{X}\var{Y} + \var{X}\E{Y}^2 + \var{Y}\E{X}^2~.
\end{equation}
Thus, with $X=\widehat{\mathit{w}}(C)$,
$Y=\widehat{\mathit{p}}(C)$ and for $m\ge 3$ we obtain
\begin{align}
\var{\widehat{\mathit{e}}(C)}
&= \frac{\mathit{p}\,\mathit{w}^2\,\bigl(m+\mathit{p}-1\bigr)}{m\,(m-2)}
\sim \frac{\,p \,w^2}{m}~,
\label{eq:var_ehat}
\end{align}
for $p = p(C)$ and $w = w(C)$.

\subsection{Random walks}
Another important application of the edge samples stored in $\mathbf{F}_i$ is the simulation of random neighbor selection and random walks, which are fundamental primitives in many graph algorithms.

The $\mathbf{F}_i$ component of EdgeSketch provides, for each node $i$, an explicit weighted sample of edges incident to $i$. In particular, by Equation~\eqref{eq:fil}, an edge $\{i,j\}$ occupies any position $k$ in $\mathbf{F}_i$ with probability $w_{\{i,j\}}/\mathit{deg}(i)$. Consequently, one can simulate a random walk
by repeatedly (and independently) drawing a random sampled incident edge from $\mathbf{F}_i$ at the currently visited node and moving to its other endpoint. Such simulations enable a range of applications, including identifying influential nodes (e.g., PageRank), constructing node $k$-embeddings (e.g., NodeSketch), and detecting densely connected clusters (e.g., the Louvain method; see next section).

\section{Applications}
\label{sec:applications}

In this section, we focus on applications that best showcase the capabilities of EdgeSketch. Using EdgeSketch, we implemented and tested a number of graph algorithms, including the greedy peeling algorithm for densest subgraphs~\cite{Greedy2024}, Karger’s randomized algorithm for minimum cut~\cite{Karger1996}, and a Union--Find approach for identifying giant components~\cite{KNUTH_1978, Thota_2021}. Due to space limitations, we cannot discuss them all. 
Therefore, we decided to present our implementation of the Louvain method for community detection only, as it is somewhat more complex than the other algorithms considered and clearly demonstrates the versatility of EdgeSketch.
However, we would like to emphasize that EdgeSketch makes it straightforward to implement many popular graph algorithms, and to the best of our knowledge, no other sketch or comparably compact representation offers similar capabilities.

In addition, we also report results for the graph reconstruction problem, for two reasons. First, this allows us to demonstrate the improvement over NodeSketch, which is the direct predecessor of EdgeSketch. \mbox{Second, and more importantly,} the reconstruction task is a fundamental and challenging benchmark, testing EdgeSketch's ability to compress and recover the structure of the graph.
\mbox{The source code,} and the technical details of the implementation,
 are available on \href{https://github.com/kubal/EdgeSketch}{\underline{GitHub}}.

\subsection{Community detection}
\label{sec:modularity}
In this section, we show how EdgeSketches can be used to efficiently detect graph communities and estimate graph modularity.
Modularity is a key metric in graph analysis, as it quantifies the strength of community structure. Namely, it indicates how well a graph can be partitioned into densely interconnected subgraphs, offering insight beyond simple triangle-based measures like the clustering coefficient.
Formally, for a weighted undirected graph \( G = (V,E) \) and partition \( \mathcal{P} \) of set $V$, the modularity \( mod_G(\mathcal{P}) \) can be defined as (see e.g. \cite{Fiona_2024,Barabasi2016,ModularityWeighted}):
\begin{equation}
	\!mod_G(\mathcal{P}) = \sum_{C \in \mathcal{P}} mod_G(C) \;,\;\;	
	mod_G(C) =  \frac{e(C)}{e(V)} - \frac{vol(C)^2}{vol(V)^2},
\end{equation}
where $e(C)$ and $vol(C)$ are defined as in formulas (\ref{eq:eC}) and (\ref{eq:vol_def}). 
Then, the modularity of graph \(G\) is defined as the 
maximum modularity value achievable across all possible partitions \( \mathcal{P} \) of the node set \( V \):
$$mod_G = \max_{\mathcal{P}} mod_G(\mathcal{P})~.$$

\noindent
\mbox{One can show} that $mod_G \in [0, 1)$ for $|E|\geq 1$, where 
values close to $1$ indicate dense connections within communities and sparse connections between them, revealing their natural community structure.
For a more detailed discussion and intuition, see \cite{Barabasi2016} and \cite{ModularityWeighted}.

Many community detection algorithms are based on the hypothesis that the partition with the highest modularity corresponds to the optimal community structure.
However, checking all possible partitions is usually computationally infeasible.
A popular approach in this context is the Louvain method (and its derivatives, see e.g.,
\cite{Barabasi2016,newman03fast,Leiden}).
The Louvain method is a greedy algorithm that iteratively optimizes community assignments to maximize modularity. The method has two phases.
In the $1^{st}$ phase, each node is assigned to its own \mbox{community.} 
Then, nodes are repeatedly moved to neighboring communities if such a move increases the modularity. 
Once the modularity can no longer be increased by moving single nodes, the $1^{st}$ phase ends. In the $2^{nd}$ phase, each community is treated as a single node, and the $1^{st}$ phase is repeated on this new graph.

\subsubsection{Louvain method on EdgeSketch}
Here we briefly explain how the Louvain method can be implemented based on EdgeSketch.
In Section \ref{sec:experiments}, we empirically demonstrate that 
this implementation enables significant reductions in time and memory consumption, often by orders of magnitude.
\mbox{This reduction} arises from the compact representation of the graph by the sketch and the efficiency of operations on the sketches.

A crucial step in the Louvain method is to compute the modularity change that results from moving node $v$ \mbox{to community $C$:}
	\begin{equation}
		\label{eq:modGc}
		mod_G(C,v) = mod_G(C\cup{v})  - mod_G(C)~. 	
	\end{equation}
To realize this step with EdgeSketch, first we observe that the movement of node $v$ between communities can be performed along the edges sampled in $\mathbf{F}_v$ component of the sketch $\mathbf{M}_v = (\mathbf{F}_v,\mathbf{S}_v)$.  
	\mbox{Second, it is} easy to check that
	\begin{equation}
		\label{eq:modGc_long}
		 mod_G(C,v)  =  \frac{4\, e(V) \deg_{v|C} \,-\,  2 \, vol(C) \deg_v \,-\, (\deg_v)^2 }{4\, e(V)^2 }~,
	\end{equation}
where $\deg_{v|C} = \sum_{u \in C} w_{\{v,u\}}$	represents the degree of node $v$ within the subgraph induced by community $C$.
Since $e(V)$ is constant for all $C$ and $v$, the denominator $4\, e(V)^2$ can be neglected, and we will show that the nominator can be estimated with sketches. 

Namely, for any $v \in V$ and $C \subseteq V$, the values of $\deg_v$ and $vol(C)$ can be efficiently estimated as shown in \mbox{Sections~\ref{sec:degrees} and~\ref{sec:density}.}
\mbox{The total number} of graph edges $e(V)=|E|$ can be estimated once per algorithm run with formula~(\ref{eq:num_edges}).
To estimate $\deg_{v|C}$, we define indicator random variables analogous to those in Equation (\ref{eq:Ik}). 
\mbox{For the $k$th} edge in the sample $\mathbf{F}_v$, let
\begin{equation*}
	I_k = \mathbbm{1}\{\text{the $k$th sampled edge connects $v$ to a node in $C$}\}
\end{equation*}
Based on formula (\ref{eq:fil}), each edge $\{v,u\}$ in $\mathbf{F}_v$ is sampled with probability $w_{\{v,u\}}/\deg_v$, so we have:
	$$
	\E{I_k} = \sum_{u \in C} \frac{w_{\{v,u\}}}{\deg_v} = \frac{\deg_{v|C}}{\deg_v}~.
	$$
Then, the unbiased estimator $\widehat{deg}_{v|C}$ for $\deg_{v|C}$ can be defined as:
	\begin{equation}
		\label{eq:degvC}
	\widehat{deg}_{v|C} = \widehat{deg}_v \,\cdot\, \frac{1}{m}\sum\nolimits_{k=1}^{m} I_k~.
	\end{equation}
It is unbiased because both estimators on the right are unbiased and they are independent (cf. Lemma~\ref{lem:race-indep}). 
For the same reason, by using  formula (\ref{eq:var_product}) we obtain:
	$\vvar{\widehat{deg}_{v|C}} \sim (deg_{v|C} \cdot  deg_v)/m~.$

To estimate the numerator of \(mod_G(C,v)\) in Equation~(\ref{eq:modGc_long}), we substitute the actual values with their unbiased estimators.
However, the resulting estimate may exhibit a small bias due to variance and covariance terms that arise when taking expectations of products of dependent estimators, such as
	$$\mathbb{E}[(\widehat{deg}_v)^2] =  \mathbb{E}[\widehat{deg}_v]^2 + \mathrm{Var}(\widehat{deg}_v)~,$$ 
		$$\mathbb{E}[\widehat{vol}(C) \cdot \widehat{deg}_v] =  \mathbb{E}[\widehat{vol}(C)] \cdot \mathbb{E}[\widehat{deg}_v] + \mathrm{Cov}(\widehat{vol}(C), \widehat{deg}_v)~.$$
Fortunately, this bias is of order $\bigO{1/m}$, where $m$ is the sketch size, which can be easily shown by using the previously derived variances of the estimators and the Cauchy--Schwarz inequality to bound the covariance terms: $|\mathrm{Cov}(X,Y)| \leq \sqrt{\mathrm{Var}(X)\,\mathrm{Var}(Y)}$.
	
With the ability to efficiently estimate \(mod_G(C,v)\), we can run the Louvain method on EdgeSketch and obtain a graph partition $\mathcal{P}_{ES}$ in the same way as we obtain a partition $\mathcal{P}_{AL}$ when we use the adjacency list,	i.e., by iteratively reassigning nodes \mbox{to communities.} 
\mbox{Note that the partition} \(\mathcal{P}_{ES}\) relies on incomplete knowledge of the graph structure - the sketch offers only a sample of edges that are used to move the nodes between communities, and the estimator of $mod_G(C,v)$ introduces noise.
Consequently, $mod_G(\mathcal{P}_{AL})$ is typically higher than $mod_G(\mathcal{P}_{ES})$, 
with rare exceptions due to randomness.
\mbox{However,} the difference $|mod_G(\mathcal{P}_{AL}) - mod_G(\mathcal{P}_{ES})|$ quickly decreases as the parameter $m$ increases \mbox{(see Figure \ref{fig:dynamic_mod_a} in Section \ref{sec:experiments}).}

\subsubsection{Estimating modularity on EdgeSketch}
\label{sec:mod_estimator}
Computing the exact values of $mod_{G}(\mathcal{P}_{AL})$ and $mod_{G}(\mathcal{P}_{ES})$ requires the full knowledge of graph $G$.
While this is feasible in controlled experiments, it is generally not possible in real large-scale streaming scenarios, where we only have access to the sketch representation of the graph.
\mbox{Therefore, we} define an estimator $\widetilde{mod}_{ES}(\mathcal{P})$ of the modularity $mod_{G}(\mathcal{P})$ based solely on the EdgeSketch. 
Specifically, $mod_{G}(\mathcal{P})$ is computed using the exact values of $e(C)$ and $vol(C)$,  while estimator $\widetilde{mod}_{ES}(\mathcal{P})$ relies on sketch-based estimates:
   \begin{equation}
	\label{eq:mod_estimator}
      \widetilde{mod}_{ES}(\mathcal{P}) = \sum_{C \in \mathcal{P}} \left(
         \frac{\widehat{e}(C)}{\widehat{e}(V)} - \frac{\widehat{vol}(C)^2}{\widehat{vol}(V)^2}
      \right)~.
   \end{equation}

Analogously to the estimator of $mod_G(C,v)$ analyzed above, although estimators $\widehat{e}(C)$ and $\widehat{vol}(C)$ are unbiased, estimator~(\ref{eq:mod_estimator}) involves their ratios and quadratic terms, which may introduce a small bias. As in the case of $mod_G(C,v)$, one could bound the bias using variance and covariance estimates, but the calculations are more involved here. Instead, we use the fact that $\widetilde{mod}_{ES}(\mathcal{P})$ is a smooth function of sample means over $m$ independent sketch positions. 
By the Central Limit Theorem, these sample means are asymptotically normal, and applying the delta method (see e.g., \cite{vanderVaart1998}) to the smooth function $\widetilde{mod}_{ES}(\mathcal{P})$ shows that the bias is $\bigO{1/m}$ and the standard error is $\bigO{1/\sqrt{m}}$. The key insight is that the bias arises from the quadratic term in a second-order Taylor expansion, which scales as $1/m$ due to the $O(1/m)$ variance of the component estimators. The above asymptotics for bias and standard error of $\widetilde{mod}_{ES}(\mathcal{P})$ are empirically validated in Section~\ref{sec:experiments} (see Figure~\ref{fig:dynamic_modularity}b).

\subsubsection{Selection bias}
If we use the same sketch both to \emph{choose} the partition $\mathcal{P}_{ES}$ and to \emph{evaluate} $\widetilde{mod}_{ES}(\mathcal{P}_{ES})$, the reported modularity can be over-optimistic (see: double dipping problem). The cleanest fix is to use two independent sketches with different seeds for randomness. Then, we can compute $\mathcal{P}_{ES}$ on the first sketch and evaluate $\widetilde{mod}_{ES}(\mathcal{P}_{ES})$ on the second.
If only one sketch is available, a simple 2-fold procedure also works: split sketch positions into two halves $A$ and $B$, evaluate $\widetilde{mod}_{ES}(\mathcal{P}_A)$ on $B$ where $\mathcal{P}_A$ is found using $A$, then swap the roles and average the two values.

\subsection{Graph reconstruction}
\label{sec:similarity}

In this section, we show that EdgeSketch can be successfully used to reconstruct a large portion of the edges from the original graph, in addition to the small subset of edges that are explicitly stored in the $\mathbf{F}$ part of the sketch.
The graph reconstruction task is arguably the most challenging and fundamental benchmark, as it directly measures how well the sketch compresses graph information.

A natural approach to edge reconstruction from a sketch is to measure node similarity based on their embeddings.
The intuition is that higher similarity values between nodes indicates a higher probability of an edge existing between them. Using a sketch, we can compute similarities between all node pairs and sort them in descending order to identify the most likely edges.
This approach underlies the solution proposed for NodeSketch and other graph embedding methods mentioned in Section~\ref{sec:intro}. 
\mbox{The effectiveness} of this approach depends critically on the quality of the embeddings and how well they capture the graph structure.

In NodeSketch, embeddings are created by simulating random walks and sampling each node's neighborhood (k-embeddings correspond to samples of their k-neighborhoods). Similarly, for EdgeSketch, we could estimate the similarity between two nodes by simulating random walks based on the $\mathbf{F}$ part of each EdgeSketch and then building and comparing their k-embeddings.
However, with EdgeSketch, it is much more effective to leverage set-theoretic operations, which, as we demonstrate in Section \ref{sec:experiments}, leads to better results compared to NodeSketch in the graph reconstruction task.

Let $\mathcal{E}_d(u)$ denote the set of edges incident to nodes within distance $d$ from $u$.
The weighted Jaccard similarity \cite{Lemiesz2023}, defined as:
$$
	J_d(u, v) =
	 \frac{|\mathcal{E}_d(u) \cap \mathcal{E}_d(v)|_w}{|\mathcal{E}_d(u) \cup \mathcal{E}_d(v)|_w}~,
$$
provides a natural measure of similarity between nodes $u$ and $v$. 
To compute this similarity using EdgeSketch, we first construct $\mathbf{M}_u^d$, the sketch representing $\mathcal{E}_d(u)$, by applying the union operation to sketches of nodes within distance $d$ from node $u$.
Node distances are derived based on edge samples in part $\mathbf{F}$ of each node sketch.
Given sketches $\mathbf{M}_u^d$ and $\mathbf{M}_v^d$, we can efficiently estimate $J_d(u, v)$ by simulating set \mbox{operations (see Section \ref{sec:related} and \cite{Lemiesz2023}).}

Finally, we compute similarity matrices $\mathbf{S}^{(d)} = [J_d(u, v)]_{u,v \in V}$ for different values of $d$ and combine them into a single similarity matrix:
$\mathbf{S} = \sum_{d=0}^{k} \alpha^{\,d} \cdot \mathbf{S}^{(d)}$.
The parameter $\alpha >0$ controls how much weight is given to more distant neighborhoods, with smaller values of $\alpha$ placing more emphasis on local edge structures.

\section{Experiments}
\label{sec:experiments} 
We evaluate EdgeSketch using synthetic and real-world datasets.
\mbox{Experiments} were performed
 on a MacBook Pro with M4 Pro (14-core CPU), 24 GB RAM, and 1 TB SSD.
 For synthetic datasets, we use a single-threaded implementation to ensure consistent runtime comparisons. For real-world datasets, we utilize all 14 CPU threads.
\mbox{The source code for all experiments is available on \href{https://github.com/kubal/EdgeSketch}{\underline{GitHub}}.}

\subsection{Datasets}
\label{sec:datasets}

We conduct experiments on both synthetic graphs generated by the Stochastic Block Model and on large-scale real-world datasets.
\mbox{Using the} Stochastic Block Model allows for controlled experiments, in particular specifying the number of edges and knowing the true modularity, which is unknown for real-world large graphs.

\subsubsection{Stochastic Block Model} 
The Stochastic Block Model (SBM) randomly divides $n$ nodes into $b$ blocks, adding undirected edges between nodes in the same block with probability $p$ and between nodes in different blocks with probability $q$. 
Additionally, each edge $\{u,v\}$ is assigned a random weight:  $w_{\{u,v\}} \sim \mathrm{Exp}(1)$.

For instance, consider graph $\mathbb{G}$ with $n=10^5$ nodes and $b=10$ blocks, \mbox{$p=0.3$} and \mbox{$q=0.03$}, which has approximately $2.85 \times 10^8$ edges.
Storing this graph using an adjacency list requires over \mbox{9.1 GB}, while using an adjacency matrix requires approximately 80 GB. In contrast, EdgeSketch with the size parameter $m=100$ requires only 251 MB of storage and more importantly, as we show below, is sufficient for the considered applications.
One can verify that the memory consumption observed in our experiments aligns with the theoretical analysis presented in Section \ref{sec:memory}.
The memory advantage of EdgeSketch over an adjacency list would become even more significant with an increasing number of edges, as EdgeSketch's size remains constant while the adjacency list grows.

\subsubsection{Epinions} 
\label{sec:Epinions}
We also evaluate EdgeSketch using \texttt{rec-epinions}, a publicly available bipartite graph from the Network Data Repository~\cite{nr}, based on data collected from epinions.com. We will refer to this bipartite graph as $\mathbb{B}$.
It consists of approximately $1.2 \times 10^5$ users and $7.5 \times 10^5$ items, connected by $1.34 \times 10^7$ edges. 
\mbox{Such bipartite} graphs are commonly transformed into unipartite item-item or user-user similarity graphs, which serve as a foundation for collaborative filtering and other recommendation systems (see, e.g., \cite{nr,Hamedani2021TrustRec} and references therein).  
Following this approach, from graph $\mathbb{B}$, we aim to construct an item-item graph, where the weight of an edge between two items is defined as the number of users who rated both items. 
This results in a unipartite weighted \mbox{graph $\mathbb{U}$} with roughly  $7.5 \times 10^5$ nodes and $3.8 \times 10^{10}$ edges. 
Storing this graph as an adjacency list would require \mbox{approximately 1.17~TB.}
Due to memory constraints, constructing graph $\mathbb{U}$ explicitly in a lossless format was infeasible on our machine. 
\mbox{In principle,} one could run graph algorithms on $\mathbb{U}$ without materializing it, just by recomputing edge weights on the fly from~$\mathbb{B}$. However, this approach would be prohibitively slow in practice. Namely, in graph algorithms repeatedly accessing edge weights (e.g., the Louvain method),
each access would require computing an intersection of two neighborhood sets in~$\mathbb{B}$. 
Fortunately, EdgeSketch for graph $\mathbb{U}$ can be effectively built directly from a stream of edges derived \mbox{from graph $\mathbb{B}$,} without materializing $\mathbb{U}$.

Since the graph $\mathbb{U}$ cannot be stored as an adjacency list due to its size, we additionally consider its smaller versions to verify the results of experiments run on EdgeSketch.
Specifically, we filter out edges with weight below a threshold $R$, representing the minimum number of users who rated both items, and remove the resulting isolated vertices. We denote these filtered graphs by $\mathbb{U}_{\geq R}$.
\mbox{For graphs $\mathbb{U}_{\geq R}$ with} sufficiently large $R$, we can store both the EdgeSketch and the adjacency list, enabling direct comparison of the Louvain method on both representations.

\subsection{Modularity estimation}
\label{sec:experiment_mod_est}

We start by presenting sample results of the Louvain method applied to the graph $\mathbb{G}$ described above, illustrating the advantages of EdgeSketch.
Subsequently, we conduct a series of experiments on a smaller SBM graph $\mathbb{H}$ (more details below) to provide a systematic analysis of the performance of the Louvain method on EdgeSketch.

First, let $\mathcal{P}_{T}$ denote the ground-truth partition of the graph nodes. 
Whenever $\mathcal{P}_{T}$ is available for a graph $G$, $mod_G(\mathcal{P}_{T})$ serves as a reference benchmark for modularity estimation. 
For the graph $\mathbb{G}$, we have $mod_{\mathbb{G}}(\mathcal{P}_{T}) = 0.4263$.
\mbox{We applied} the Louvain method to the adjacency list of $\mathbb{G}$ 
a few times, 
\textcolor{black}{each run took about 1h 56m}.
Each time, the method yielded 
a partition $\mathcal{P}_{AL}$ with modularity matching the ground-truth value: \mbox{$mod_{\mathbb{G}}(\mathcal{P}_{AL}) = 0.4263$.}

\mbox{Table \ref{tab:modularity_results} presents} the performance of the Louvain method on EdgeSketch of graph $\mathbb{G}$ with varying sketch size parameters $m$. 
The table shows that increasing the sketch size $m$ improves the accuracy of modularity estimation while maintaining substantial memory and runtime savings.
Note that with $m=100$, EdgeSketch achieves a 36-fold reduction in memory usage and a \textcolor{black}{15-fold} reduction in runtime compared to the adjacency list, 
while providing a reliable modularity estimation.
Let us also note that decreasing $m$ does 
not always lead to a  runtime reduction (see $m=100$, $m=300$). 
This is because smaller values of $m$ can increase the number of steps required for convergence in the Louvain method's first phase.

Second, since running the Louvain method on graph $\mathbb{G}$ is time consuming, making it difficult to conduct a series of systematic experiments, we perform further analyses on a moderately sized SBM graph $\mathbb{H}$ with $n=10^4$, $b=10$, $p=0.5$, $q=0.05$, resulting in approximately $4.7 \times 10^6$ edges with random weights $w_{\{u,v\}} \sim \text{Exp}(1)$.
The EdgeSketch of graph $\mathbb{H}$ requires 25~MB when $m=100$ and 73~MB when $m=300$, compared to 152~MB for the \mbox{adjacency list.}

\begin{table}[t]
    \centering
    \caption{Performance of the Louvain method on EdgeSketch for graph $\mathbb{G}$ across varying sketch size parameters $m$. 
    \mbox{The table shows:} the true modularity of the partition $\mathcal{P}_{ES}$, the estimated modularity $\widetilde{mod}_{ES}(\mathcal{P}_{ES})$, the EdgeSketch storage size in MB, and the Louvain method runtime in minutes.}
    \label{tab:modularity_results}
    \begin{tabular}{|c|c|c|c|c|}
        \hline
        \rule{0pt}{3ex}$m$ & $mod_\mathbb{G}(\mathcal{P}_{ES})$ & $\widetilde{mod}_{ES}(\mathcal{P}_{ES})$ & Size (MB) & \textcolor{black}{Time (m)} \\
        \hline
        50 &  0.2097  & 0.2493   & 131   &  1.3 \\
        \hline
        100 & 0.4171   & 0.4335   &  251  & 7.9  \\
        \hline
        300 &  0.4263  & 0.4324   & 731  & 4.8  \\ 
		\hline
        500 &  0.4263  & 0.4395   & 1211  & 8.8  \\
		\hline
        1000 & 0.4263  & 0.4264   & 2411  & 17.0 \\
        \hline
    \end{tabular}    
\end{table}

\begin{figure*}[!t]
    \centering
    \begin{subfigure}[t]{0.48\textwidth}
        \centering
        \includegraphics[width=\linewidth]{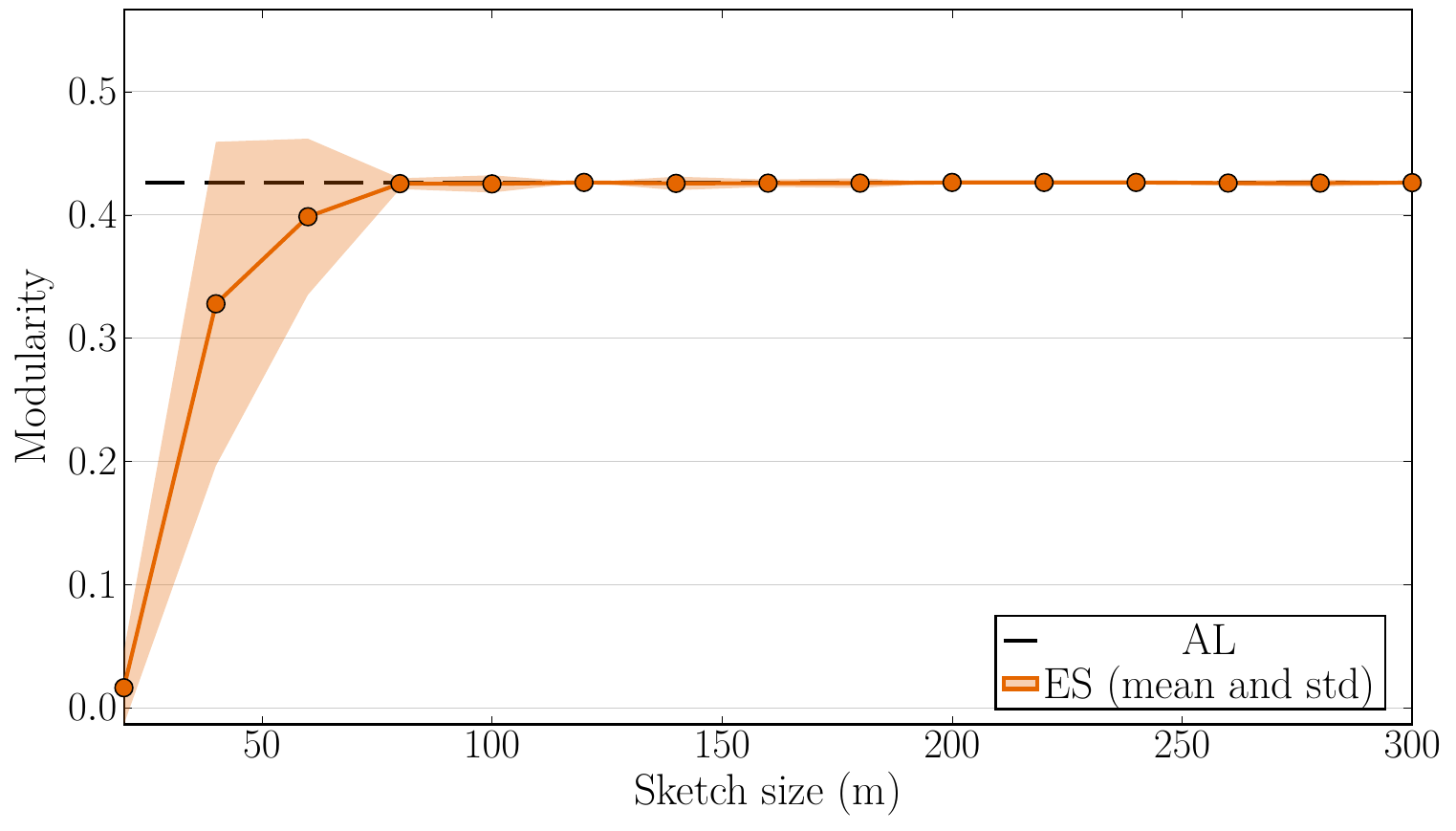}
        \caption{Comparison of $mod_{\mathbb{H}}(\mathcal{P}_{AL})$ and $mod_{\mathbb{H}}(\mathcal{P}_{ES})$.}
        \label{fig:dynamic_mod_a}
    \end{subfigure}%
    \hfill
    \begin{subfigure}[t]{0.48\textwidth}
        \centering
        \includegraphics[width=\linewidth]{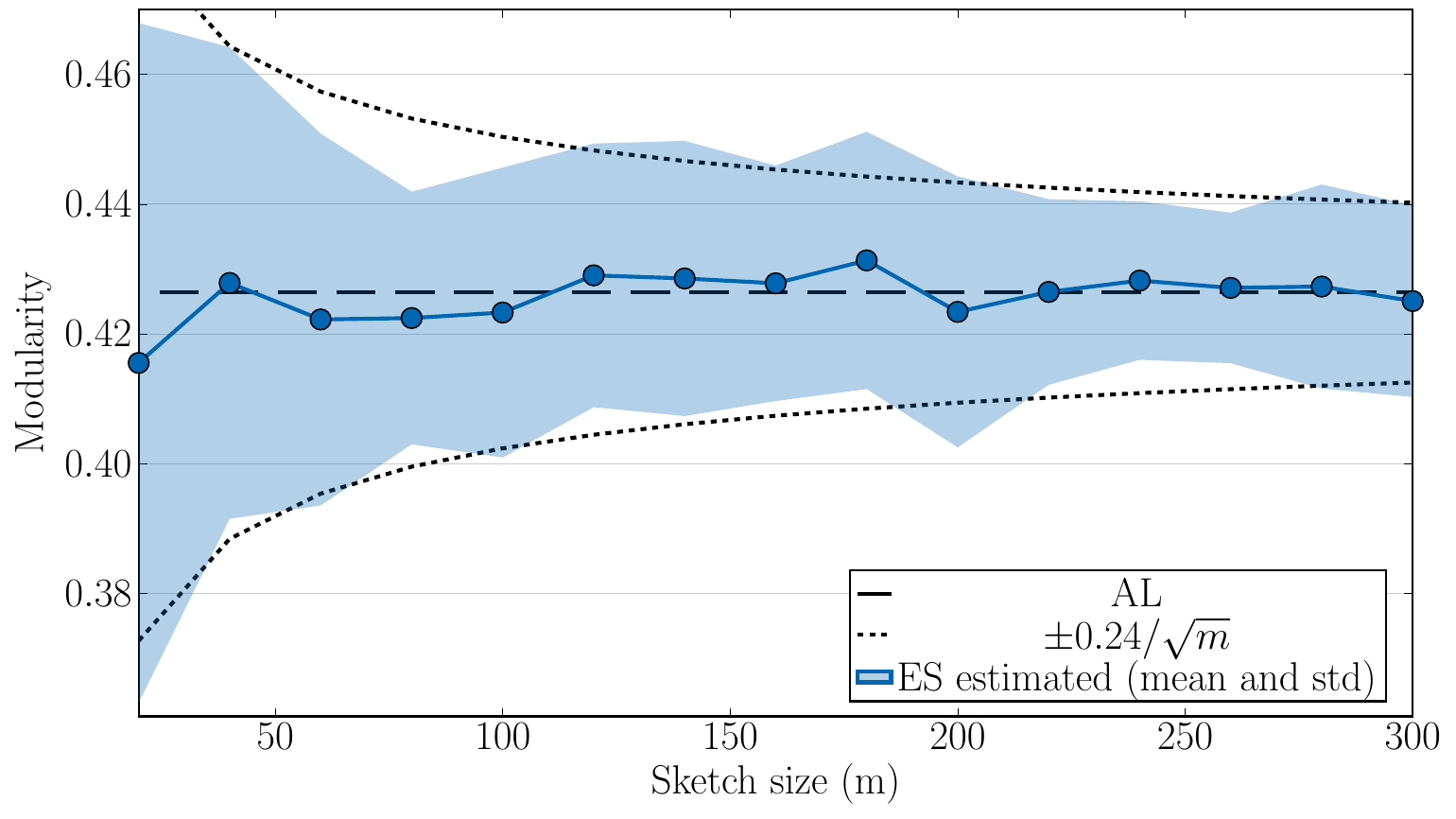}
        \caption{Comparison of $mod_{\mathbb{H}}(\mathcal{P}_{AL})$ and $\widetilde{mod}_{ES}(\mathcal{P}_{AL})$.}
        \label{fig:dynamic_mod_b}
    \end{subfigure}

    \vspace{0.5em}

    \begin{subfigure}[t]{0.48\textwidth}
        \centering
        \includegraphics[width=\linewidth]{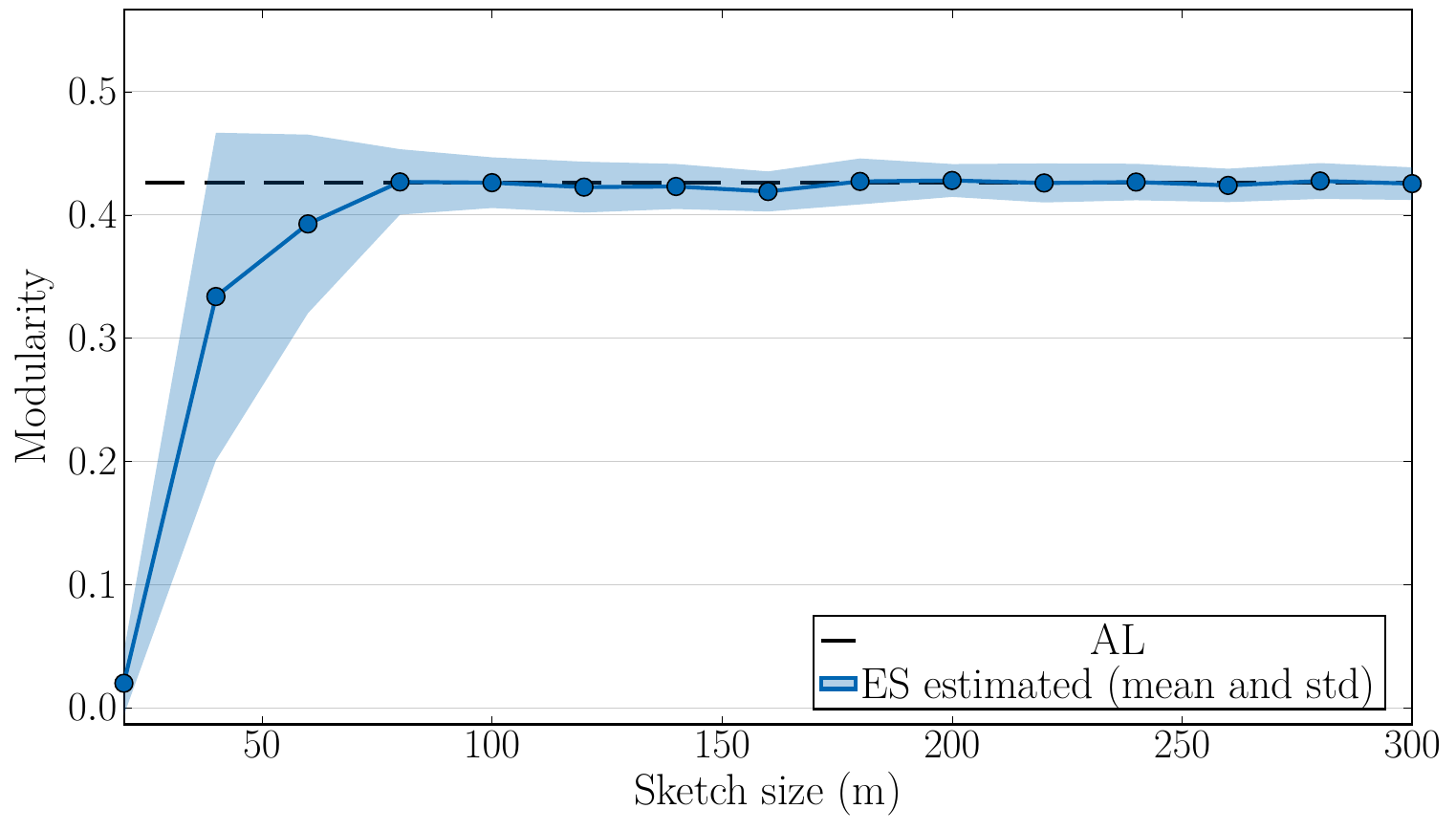}
        \caption{Comparison of $mod_{\mathbb{H}}(\mathcal{P}_{AL})$ and $\widetilde{mod}_{ES}(\mathcal{P}_{ES})$.}
        \label{fig:dynamic_mod_c}
    \end{subfigure}%
    \hfill
    \begin{subfigure}[t]{0.48\textwidth}
        \centering
        \includegraphics[width=\linewidth]{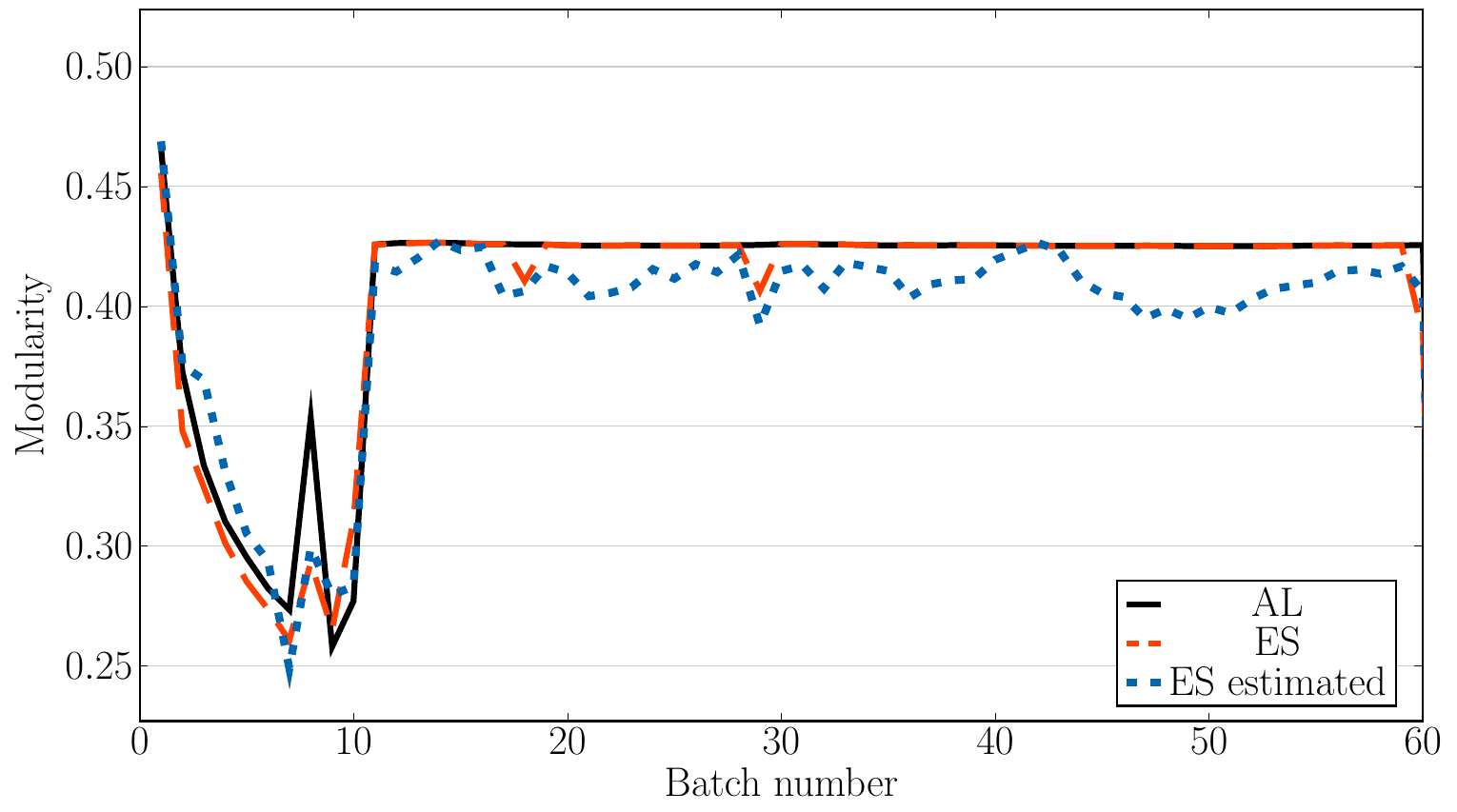}
        \caption{Comparison of $mod_{\mathbb{H}}(\mathcal{P}_{AL})$, $mod_{\mathbb{H}}(\mathcal{P}_{ES})$, and $\widetilde{mod}_{ES}(\mathcal{P}_{ES})$ in a dynamic graph setting (sketch size parameter $m=100$).}
        \label{fig:dynamic_mod_d}
    \end{subfigure}
    \vspace{-0.1cm}
    \caption{Experimental results on graph $\mathbb{H}$. 
    In Figures (a)--(c), the reference value $mod_{\mathbb{H}}(\mathcal{P}_{AL})$ is denoted by a dashed line, while EdgeSketch-based estimates are computed for sketch sizes $m$ ranging from 20 to 300 in steps of 20. For each $m$, we run 50 trials, reporting the mean (solid line) and standard deviation (shaded region). 
        Figure~(d) depicts the dynamic setting: edges arrive in 60 batches of $6 \times 10^4$ randomly selected edges each, with three modularity approximations computed after every \mbox{batch ($m=100$).}
        }
    \label{fig:dynamic_modularity}
\end{figure*}

Figure~\ref{fig:dynamic_modularity} presents experimental results comparing different modularity estimations on graph $\mathbb{H}$. 
Figure \ref{fig:dynamic_mod_a} compares $mod_{\mathbb{H}}(\mathcal{P}_{AL})$ and $mod_{\mathbb{H}}(\mathcal{P}_{ES})$, where both partitions $\mathcal{P}_{AL}$ and $\mathcal{P}_{ES}$ are evaluated on the original graph $\mathbb{H}$. 
For $mod_{\mathbb{H}}(\mathcal{P}_{ES})$, we change the sketch size $m$ from 20 to 300 in increments of 20. 
The figure shows that as $m$ increases, the bias and standard deviation of $mod_{\mathbb{H}}(\mathcal{P}_{ES})$ decrease, and for $m\geq 100$, they become nearly negligible.

Figure \ref{fig:dynamic_mod_b} compares $mod_{\mathbb{H}}(\mathcal{P}_{AL})$ with sketch-based estimates $\widetilde{mod}_{ES}(\mathcal{P}_{AL})$ for a fixed partition $\mathcal{P}_{AL}$ for various sketch sizes $m$.
The results confirm that $\widetilde{mod}_{ES}$ is asymptotically unbiased with standard deviation scaling as $1/\sqrt{m}$, as discussed in Section~\ref{sec:mod_estimator}. The theoretical bounds (with empirically fitted constant $c=0.24$) closely match the observed behavior of $\widetilde{mod}_{ES}$.

Figure \ref{fig:dynamic_mod_c} compares $mod_{\mathbb{H}}(\mathcal{P}_{AL})$ with $\widetilde{mod}_{ES}(\mathcal{P}_{ES})$, representing a fully sketch-based approach where both the partition $\mathcal{P}_{ES}$ and the modularity estimate $\widetilde{mod}_{ES}$ are derived entirely from EdgeSketch.
The figure shows that as $m$ increases, the bias and standard deviation of $\widetilde{mod}_{ES}(\mathcal{P}_{ES})$ decrease. As we combine two sources of uncertainty, the bias exhibits behavior analogous to that of $mod_{\mathbb{H}}(\mathcal{P}_{ES})$ in Figure~\ref{fig:dynamic_mod_a}, while the standard deviation follows the pattern observed for $\widetilde{mod}_{ES}(\mathcal{P}_{AL})$ in Figure~\ref{fig:dynamic_mod_b}. 

Figure \ref{fig:dynamic_mod_d} illustrates modularity estimation in a dynamic setting. 
We randomly sample batches of $5 \times 10^4$ edges from the full set of approximately $4.7 \times 10^6$ edges and sequentially add them to an initially empty graph.
After each batch, we compute $mod_\mathbb{H}(\mathcal{P}_{AL})$, $mod_\mathbb{H}(\mathcal{P}_{ES})$, and $\widetilde{mod}_{ES}(\mathcal{P}_{ES})$.
Initially, all Louvain-based modularity values deviate from the ground-truth $mod_\mathbb{H}(\mathcal{P}_{T}) = 0.4265$, as the block structure is harder to identify with fewer edges.
\mbox{However,} apart from minor fluctuations, all three estimates 
remain close to each other,
 and from approximately the $12^{th}$ batch onward, they align with $mod_\mathbb{H}(\mathcal{P}_{T})$. 
In this experiment, we use EdgeSketch with $m=100$, providing a 6-fold reduction in memory (25~MB vs.\ 152~MB for the adjacency list).

\subsection{Graph reconstruction} 

Below, we present experimental results for the graph reconstruction problem described in Section \ref{sec:similarity}.
We compare EdgeSketch to NodeSketch, which has been extensively benchmarked against other edge prediction methods in~\cite{NodeSketch,StreamingGraphEmbeddings}.
To ensure a fair comparison, we normalize the memory budget across methods. 
Each EdgeSketch element requires $64$ bits for part $\mathbf{S}$ and $2\times 64$ bits for part $\mathbf{F}$, while NodeSketch stores only a 64-bit node identifier per element.
\mbox{Since the sampled} edges in $\mathbf{F}$ directly contribute to reconstruction, we evaluate EdgeSketch both with and without storing part $\mathbf{F}$, denoting the latter version as $\text{EdgeSketch}^*$.
Thus, with a fixed memory budget, if EdgeSketch uses $m$ elements per node, then NodeSketch and $\text{EdgeSketch}^*$ are each allocated $3m$ \mbox{elements per node.}

We build all three sketches to obtain $k$-embeddings of graph nodes.
The values of $k=4$ and $\alpha=0.2$ were tuned experimentally to maximize precision  
for \mbox{graph $\mathbb{H}$.}
For EdgeSketch and $\text{EdgeSketch}^*$, we use part $\mathbf{S}$ to find weighted Jaccard similarities $J_d(u, v)$ across different distances $d$ and combine them into a single similarity score using the weighted sum formula  (see Section~\ref{sec:similarity}).
Edges stored in part $\mathbf{F}$ of EdgeSketch are assigned 100\% similarity.
\mbox{For NodeSketch,} $k$-embeddings are used to find node similarity via weighted Jaccard similarity based on matching positions (see~\cite{NodeSketch}).

For all sketches, we select the top $t$ node pairs with the highest similarity scores as the predicted edges. 
Let $\widehat{E}_t$ denote the set of $t$ predicted edges and $E$ the set of true edges. The precision of the prediction is then defined as $P_t = |\widehat{E}_t \cap E| / t$.
Figure~\ref{fig:reconstruction} shows the edge reconstruction precision $P_t$ for graph $\mathbb{H}$ as $t$ ranges from 0\% to 100\% of the total number of edges $|E|$.
Note that the baseline precision for random edge guessing is approximately $9.5\%$ (the graph $\mathbb{H}$ density), whereas knowledge that two nodes are in the same cluster yields a baseline precision of 50\% ($p=0.5$ in SBM).

In Figure~\ref{fig:reconstruction}a, we fix $m=100$, which results in about 25 MB of memory for each of the three sketches.
For EdgeSketch, this gives \mbox{$n \cdot m = 10^6$} sampled edges in part $\mathbf{F}$, potentially covering roughly 21\% of edges in \mbox{graph $\mathbb{H}$.} 
In practice, the number of unique sampled edges in  $\mathbf{F}$ is smaller: the same edge may be sampled many times and can appear in sketches of both endpoints. 
As a result, unique sampled edges account for roughly 15\% of the total edge count, and thus for \mbox{$t \leq 15\%  |E|$} EdgeSketch achieves 100\% precision.
\begin{figure}[]
    \centering
    \begin{subfigure}[t]{0.48\textwidth}
        \centering
        \includegraphics[width=\linewidth]{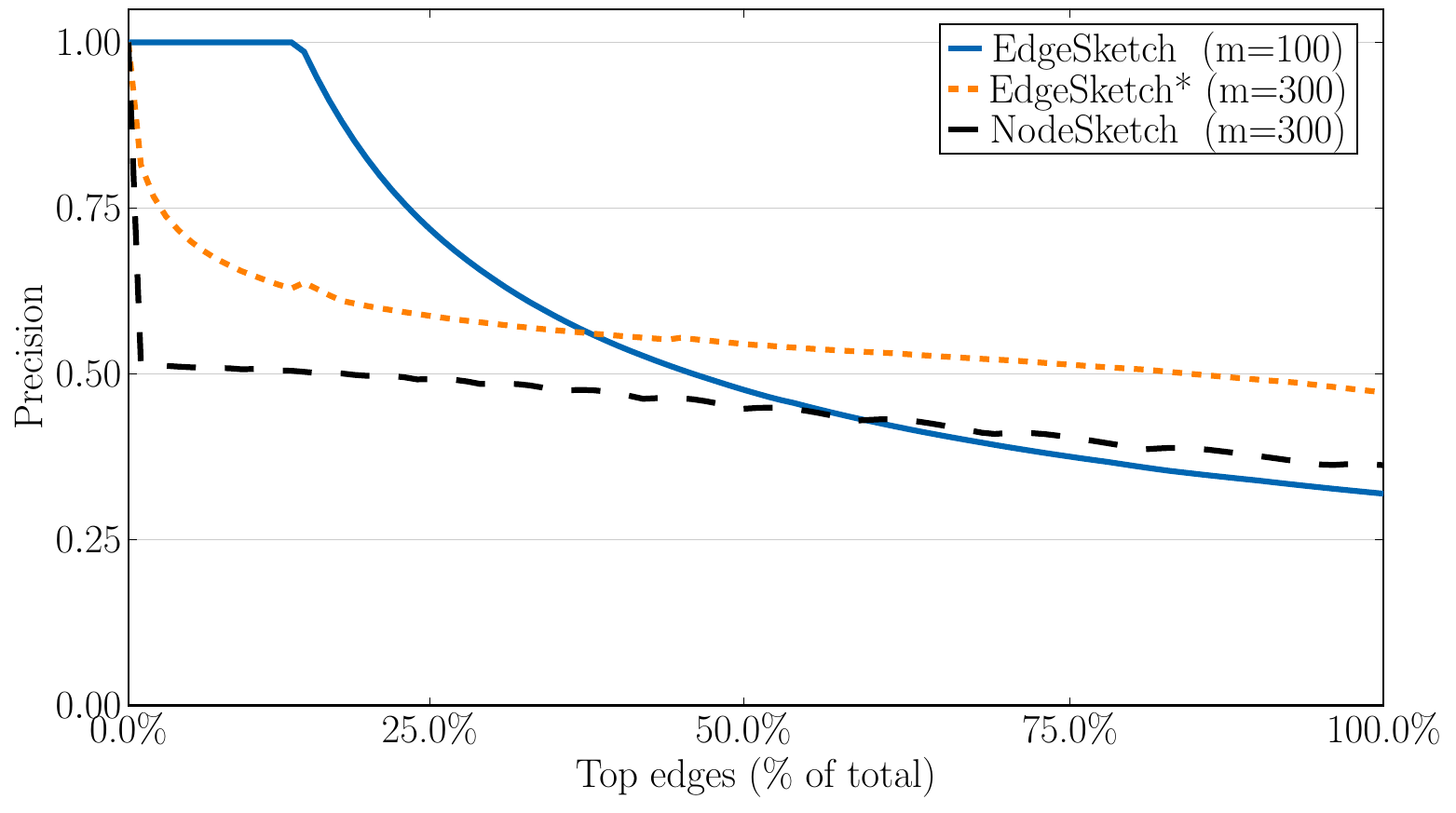}
        \caption{Memory budget of approximately 25~MB per sketch.}
        \label{fig:reconstruction_m100}
    \end{subfigure}%
    \hfill
    \begin{subfigure}[t]{0.48\textwidth}
        \centering
        \includegraphics[width=\linewidth]{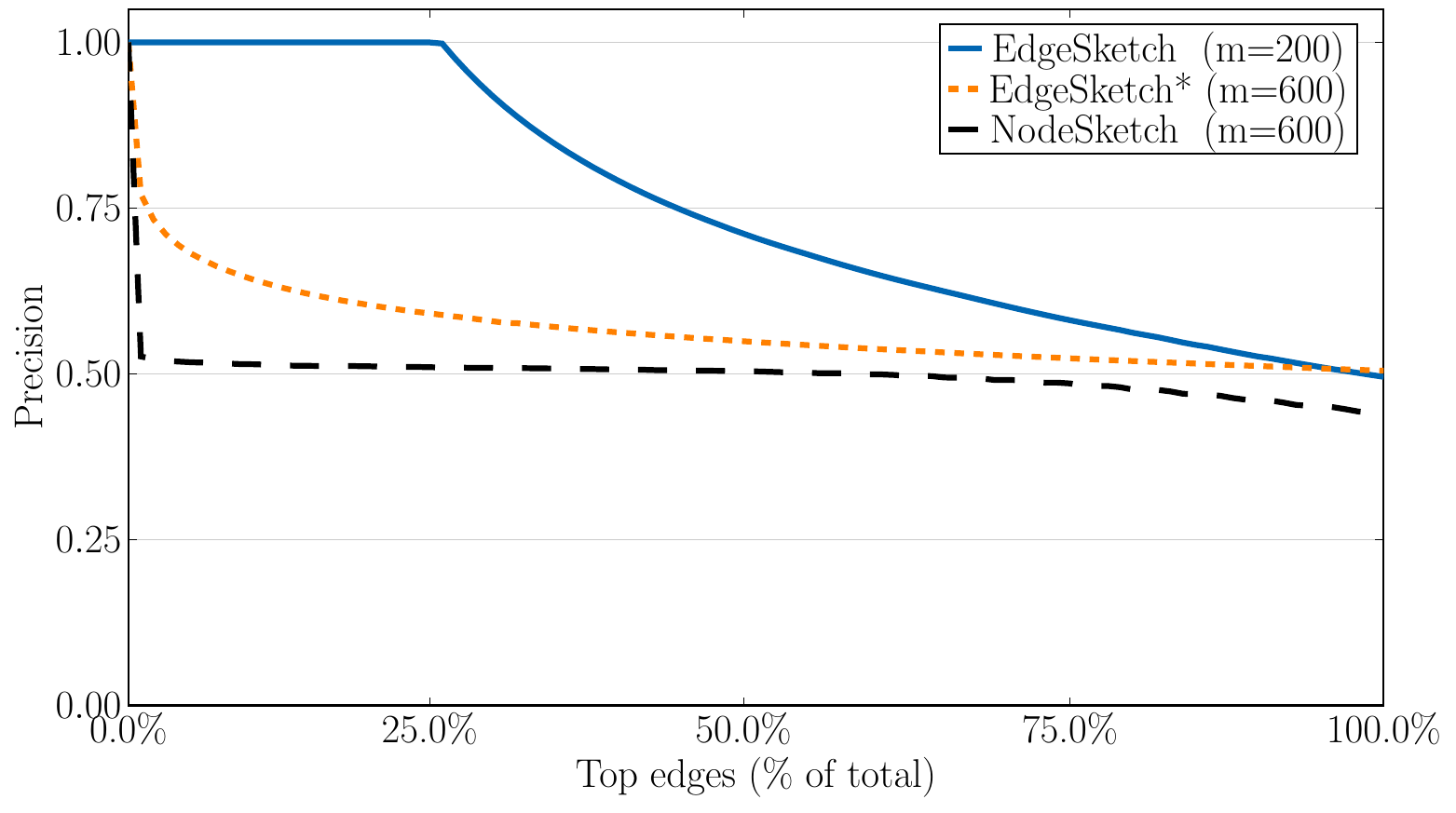}
        \caption{Memory budget of approximately 50~MB per sketch.}
        \label{fig:reconstruction_m200}
    \end{subfigure}
    \caption{Edge reconstruction precision $P_t$ on graph $\mathbb{H}$ for three sketches under equal memory budgets, with embedding depth $k=4$ and decay parameter $\alpha=0.2$.}
    \label{fig:reconstruction}
\end{figure}
For larger $t$, EdgeSketch maintains relatively high precision, exceeding both NodeSketch and $\text{EdgeSketch}^*$, before eventually falling below NodeSketch. 
\mbox{NodeSketch} gradually declines from 50\% precision to roughly 35\%. 
For all $t$, $\text{EdgeSketch}^*$ \mbox{outperforms NodeSketch.} 

In Figure~\ref{fig:reconstruction_m200} we double the memory budget to approximately 50~MB per sketch by setting $m=200$. This allows EdgeSketch to sample more edges, extending the region of perfect precision to \mbox{$t  \leq 28\% |E|$}.
Note that as $t$ increases, the precision of all three sketches converges to approximately 50\% or below, regardless of the sketch size, matching the within-cluster edge probability $p=0.5$.
This suggests that all three sketches can infer whether two nodes belong to the same cluster but struggle to distinguish which pairs within a cluster are connected.
However, for  moderate values {of $t$,} which are most relevant in practical applications, EdgeSketch achieves significantly higher precision than the \mbox{other sketches.}


\newpage
\subsection{Epinions graph}
For the graph $\mathbb{U}$ described in Section~\ref{sec:Epinions}, we construct an EdgeSketch with parameter $m=100$. 
The construction takes roughly 6.4h and the resulting sketch occupies 1.81~GB of memory, only about 0.15\% of the adjacency list size.
We then run the Louvain algorithm on this EdgeSketch, which takes 
1h 32min
and returns an estimated modularity of 0.257. The resulting partition contains 283 clusters. 

Since the full graph $\mathbb{U}$ cannot be stored as an adjacency list, we cannot verify this partition against a baseline.
To evaluate the accuracy of EdgeSketch-based partitioning, we additionally consider smaller graphs $\mathbb{U}_{\geq 5}$, $\mathbb{U}_{\geq 10}$, $\mathbb{U}_{\geq 20}$, $\mathbb{U}_{\geq 30}$ (see Section~\ref{sec:Epinions}).
\mbox{For each of these graphs, we construct EdgeSketch with $m=100$}.

Table~\ref{tab:graphs} and Table \ref{tab:louvain} present graph characteristics, EdgeSketch sizes, 
and results for the Louvain method. 
The results confirm the high efficiency of EdgeSketch compared to the adjacency list representation. 
In particular, we observe substantial memory savings \mbox{(up to \mbox{646-fold})} and runtime speedup (up to 34-fold) for the Louvain method, while maintaining high estimation accuracy.

\begin{table}[b]
    \centering
    \caption{EdgeSketch (ES, $m=100$)  construction time in minutes and memory gain over adjacency list (AL).} 
    \label{tab:graphs}
    \begin{tabular}{|c|c|c|r|r|r|r|}
        \hline
        \multirow{2}{*}{Graph} & \multirow{2}{*}{$|V|\approx$} & \multirow{2}{*}{$|E|\approx$} & \multicolumn{3}{c|}{Memory (GB)} & \multirow{2}{*}{\textcolor{black}{Time (m)}} \\
        \cline{4-6}
              &             &             & \multicolumn{1}{c|}{AL} & \multicolumn{1}{c|}{ES} & \multicolumn{1}{c|}{Gain} & \\
        \hline
        $\mathbb{U}_{\geq 30}$ & 69k & $5 \times 10^7$ & 1.8 & 0.17 & 10$\times$ & 0.5 \\
        \hline
        $\mathbb{U}_{\geq 20}$ & 107k & $2 \times 10^8$ & 5.0 & 0.27 & 19$\times$ & 1 \\
        \hline
        $\mathbb{U}_{\geq 10}$ & 208k & $6 \times 10^8$ & 19.8 & 0.52 & 38$\times$ & 5 \\
        \hline
        $\mathbb{U}_{\geq 5}$ & 448k & $2 \times 10^9$ & 60.8 & 1.13 & 54$\times$ & 17 \\
        \hline
        $\mathbb{U}$ & 755k & $4 \times 10^{10}$ & 1\,169 & 1.81 & 646$\times$ & 383 \\
        \hline
    \end{tabular}
\end{table}

\begin{table}[b]
    \centering
    \caption{Louvain method: modularity and runtime comparison between EdgeSketch (ES) and adjacency list (AL).}
    \label{tab:louvain}
    \begin{tabular}{|c|r|r|r|c|c|c|}
        \hline
        Graph & \multicolumn{3}{c|}{\textcolor{black}{Time (m)}} & \multicolumn{3}{c|}{Modularity} \\
        \cline{2-4} \cline{5-7}
              & \multicolumn{1}{c|}{AL} & \multicolumn{1}{c|}{ES} & \multicolumn{1}{c|}{Gain} & AL & ES & est. ES \\
        \hline
        $\mathbb{U}_{\geq 30}$ & 3.1 & 0.5 & 6$\times$ & 0.476 & 0.481 & 0.455 \\
        \hline
        $\mathbb{U}_{\geq 20}$ & 14.4 & 1.6 & 9$\times$ & 0.378 & 0.371 & 0.427 \\
        \hline
        $\mathbb{U}_{\geq 10}$ & 112 & 3.0 & 34$\times$ & 0.324 & 0.317 & 0.324 \\
        \hline
        $\mathbb{U}_{\geq 5}$ & 532 & 27 & 20$\times$ & 0.279 & 0.270 & 0.251 \\
        \hline
        $\mathbb{U}$ & --- & 92 & --- & --- & --- & 0.257 \\
        \hline
    \end{tabular}
\end{table}

For the EdgeSketch of graph $\mathbb{U}_{\geq 30}$, we also perform edge reconstruction.
Computing the full similarity matrix would be very time- and memory-consuming, so we restrict the computation to pairs of nodes within k hops of each other, where distances are determined using edges in the F component of the sketch. Since similarity between nodes is computed by comparing their sketch-based k-neighborhoods, considering pairs at greater distances becomes ineffective as the overlap between their neighborhoods decreases rapidly, leading to near-zero similarity scores.
For  $k=2$, $\alpha=1$ and $t = 10\%, 20\%, \ldots, 100\%$ of $|E|$,  the precision values \mbox{$P_t$ (in \%) are $93, 87, 74, 68, 63, 59, 56, 53, 50, 47$, respectively.}
Thanks to sample edges in part  $\mathbf{F}$ of EdgeSketch, $P_t = 100\%$ for $t \leq 6\% |E|$.

\section{Conclusions} 
\label{sec:conclusions}
We introduced EdgeSketch, a compact graph representation that enables efficient analysis of massive graph streams. As the number of edges grows, EdgeSketch offers increasingly significant memory savings over lossless graph representations while providing runtime benefits and maintaining controlled accuracy. Furthermore, EdgeSketch operates in a fully streaming mode: after processing the stream of edges once and creating the sketch, the offline analysis relies solely on the information stored in the sketch. 

Our theoretical analysis shows that EdgeSketch provides unbiased estimators for key graph properties (e.g., node degrees, graph density, ratio of internal edges) with variance decreasing as $\mathcal{O}(1/m)$, where $m$ is the sketch size parameter.
 Moreover, by supporting analogues of set-theoretic operations, EdgeSketch enables a variety of graph algorithms to be implemented directly on the sketch.

We evaluated EdgeSketch on two important applications: community detection and graph reconstruction. Experiments show that EdgeSketch achieves substantial memory savings  and runtime improvement compared to adjacency list representations, while maintaining reliable accuracy. 
Moreover, in the graph reconstruction task, EdgeSketch achieves notably higher precision compared to prior sketch-based methods under equivalent memory constraints.



\newpage
\balance
\bibliographystyle{ACM-Reference-Format}
\bibliography{bibliography}

@article{holland1983sbm,
  title   = {Stochastic blockmodels: First steps},
  author  = {Holland, Paul W. and Laskey, Kathryn Blackmond and Leinhardt, Samuel},
  journal = {Social Networks},
  volume  = {5},
  number  = {2},
  pages   = {109--137},
  year    = {1983},
  doi     = {10.1016/0378-8733(83)90021-7}
}

@book{Penrose2003,
    author = {Penrose, Mathew},
    title = {Random Geometric Graphs},
    publisher = {Oxford University Press},
    year = {2003},
    month = {05},
    isbn = {9780198506263},
    doi = {10.1093/acprof:oso/9780198506263.001.0001},
    url = {https://doi.org/10.1093/acprof:oso/9780198506263.001.0001},
}

@inproceedings{nr,
     title={The Network Data Repository with Interactive Graph Analytics and Visualization},
     author={Ryan A. Rossi and Nesreen K. Ahmed},
     booktitle={AAAI},
     url={https://networkrepository.com},
     year={2015}
}

@article{Hamedani2021TrustRec,
  title={TrustRec: An Effective Approach to Exploit Implicit Trust and Distrust Relationships along with Explicit ones for Accurate Recommendations},
  author={Hamedani, M. and Ali, Irfan and Hong, Jiwon and Kim, Sang-Wook},
  journal={Computer Science and Information Systems},
  volume={18},
  number={1},
  pages={93--114},
  year={2021},
  doi={10.2298/CSIS200608039H},
  url={https://doi.org/10.2298/CSIS200608039H}
}

@article{UniProt,
    author = {The UniProt Consortium },
    title = {UniProt: the Universal Protein Knowledgebase in 2025},
    journal = {Nucleic Acids Research},
    volume = {53},
    number = {D1},
    pages = {D609-D617},
    year = {2024},
    month = {11},
    issn = {1362-4962},
    doi = {10.1093/nar/gkae1010},
    url = {https://doi.org/10.1093/nar/gkae1010},
    eprint = {https://academic.oup.com/nar/article-pdf/53/D1/D609/60719276/gkae1010.pdf},
}

@article{AdultBrain,
  title={Neuronal wiring diagram of an adult brain},
  author={Dorkenwald, Sven and Matsliah, Arie and Sterling, Amy R and Schlegel, Philipp and Yu, Szi-chieh and McKellar, Claire E and Lin, Albert and Costa, Marta and Eichler, Katharina and Yin, Yijie and others},
  journal={Nature},
  volume={634},
  number={8032},
  pages={124--138},
  year={2024},
  month={oct},
  publisher={Nature Publishing Group UK London},
  doi={10.1038/s41586-024-07558-y},
  url={https://doi.org/10.1038/s41586-024-07558-y}
}

@inproceedings{BoVWFI,
  author = "Paolo Boldi and Sebastiano Vigna",
  title = "The {W}eb{G}raph Framework {I}: {C}ompression Techniques",
  year = 2004,
  booktitle = "Proc. of the Thirteenth International World Wide Web Conference (WWW 2004)",
  address = "Manhattan, USA",
  pages = "595--601",
  publisher = "ACM Press"
}

@inproceedings{BRSLLP,
  author = "Paolo Boldi and Marco Rosa and Massimo Santini and Sebastiano Vigna",
  title = "Layered Label Propagation: A MultiResolution Coordinate-Free Ordering for Compressing Social Networks",
  booktitle = "Proceedings of the 20th international conference on World Wide Web",
  editor = "Sadagopan Srinivasan and Krithi Ramamritham and Arun Kumar and M. P. Ravindra and Elisa Bertino and Ravi Kumar",
  publisher = "ACM Press",
  year = 2011,
  pages = "587--596"
}

@InProceedings{YahooMusic,
  title = 	 {The Yahoo! Music Dataset and KDD-Cup’11},
  author = 	 {Dror, Gideon and Koenigstein, Noam and Koren, Yehuda and Weimer, Markus},
  booktitle = 	 {Proceedings of KDD Cup 2011},
  pages = 	 {3--18},
  year = 	 {2012},
  editor = 	 {Dror, Gideon and Koren, Yehuda and Weimer, Markus},
  volume = 	 {18},
  series = 	 {Proceedings of Machine Learning Research},
  month = 	 {21 Aug},
  publisher =    {PMLR},
  url = 	 {https://proceedings.mlr.press/v18/dror12a.html}
}

@article{gSketch,
author = {Zhao, Peixiang and Aggarwal, Charu C. and Wang, Min},
title = {gSketch: on query estimation in graph streams},
year = {2011},
issue_date = {November 2011},
publisher = {VLDB Endowment},
volume = {5},
number = {3},
issn = {2150-8097},
url = {https://doi.org/10.14778/2078331.2078335},
doi = {10.14778/2078331.2078335},
journal = {Proc. VLDB Endow.},
month = nov,
pages = {193–204},
numpages = {12}
}

@inproceedings{GSS,
author = {Tang, Nan and Chen, Qing and Mitra, Prasenjit},
title = {Graph Stream Summarization: From Big Bang to Big Crunch},
year = {2016},
isbn = {9781450335317},
publisher = {Association for Computing Machinery},
address = {New York, NY, USA},
url = {https://doi.org/10.1145/2882903.2915223},
doi = {10.1145/2882903.2915223},
booktitle = {Proceedings of the 2016 International Conference on Management of Data},
pages = {1481–1496},
numpages = {16},
location = {San Francisco, California, USA},
series = {SIGMOD '16}
}

@INPROCEEDINGS{Scube,
  author={Chen, Ming and Zhou, Renxiang and Chen, Hanhua and Jin, Hai},
  booktitle={2022 IEEE 42nd International Conference on Distributed Computing Systems (ICDCS)}, 
  title={Scube: Efficient Summarization for Skewed Graph Streams}, 
  year={2022},
  volume={},
  number={},
  pages={100-110},
  doi={10.1109/ICDCS54860.2022.00019}
  }

@article{Auxo,
author = {Jiang, Zhiguo and Chen, Hanhua and Jin, Hai},
title = {Auxo: A Scalable and Efficient Graph Stream Summarization Structure},
year = {2023},
issue_date = {February 2023},
publisher = {VLDB Endowment},
volume = {16},
number = {6},
issn = {2150-8097},
url = {https://doi.org/10.14778/3583140.3583154},
doi = {10.14778/3583140.3583154},
journal = {Proc. VLDB Endow.},
month = feb,
pages = {1386–1398},
numpages = {13}
}

@inproceedings{DeepWalk,
author = {Perozzi, Bryan and Al-Rfou, Rami and Skiena, Steven},
title = {DeepWalk: online learning of social representations},
year = {2014},
isbn = {9781450329569},
publisher = {Association for Computing Machinery},
address = {New York, NY, USA},
url = {https://doi.org/10.1145/2623330.2623732},
doi = {10.1145/2623330.2623732},
booktitle = {Proceedings of the 20th ACM SIGKDD International Conference on Knowledge Discovery and Data Mining},
pages = {701–710},
numpages = {10},
location = {New York, New York, USA},
series = {KDD '14}
}

@inproceedings{node2vec,
author = {Grover, Aditya and Leskovec, Jure},
title = {node2vec: Scalable Feature Learning for Networks},
year = {2016},
isbn = {9781450342322},
publisher = {Association for Computing Machinery},
address = {New York, NY, USA},
url = {https://doi.org/10.1145/2939672.2939754},
doi = {10.1145/2939672.2939754},
booktitle = {Proceedings of the 22nd ACM SIGKDD International Conference on Knowledge Discovery and Data Mining},
pages = {855–864},
numpages = {10},
location = {San Francisco, California, USA},
series = {KDD '16}
}

@inproceedings{LINE,
author = {Tang, Jian and Qu, Meng and Wang, Mingzhe and Zhang, Ming and Yan, Jun and Mei, Qiaozhu},
title = {LINE: Large-scale Information Network Embedding},
year = {2015},
isbn = {9781450334693},
publisher = {International World Wide Web Conferences Steering Committee},
address = {Republic and Canton of Geneva, CHE},
url = {https://doi.org/10.1145/2736277.2741093},
doi = {10.1145/2736277.2741093},
booktitle = {Proceedings of the 24th International Conference on World Wide Web},
pages = {1067–1077},
numpages = {11},
location = {Florence, Italy},
series = {WWW '15}
}

@inproceedings{GraRep,
author = {Cao, Shaosheng and Lu, Wei and Xu, Qiongkai},
title = {GraRep: Learning Graph Representations with Global Structural Information},
year = {2015},
isbn = {9781450337946},
publisher = {Association for Computing Machinery},
address = {New York, NY, USA},
url = {https://doi.org/10.1145/2806416.2806512},
doi = {10.1145/2806416.2806512},
booktitle = {Proceedings of the 24th ACM International on Conference on Information and Knowledge Management},
pages = {891–900},
numpages = {10},
location = {Melbourne, Australia},
series = {CIKM '15}
}

@inproceedings{NetMF,
author = {Qiu, Jiezhong and Dong, Yuxiao and Ma, Hao and Li, Jian and Wang, Kuansan and Tang, Jie},
title = {Network Embedding as Matrix Factorization: Unifying DeepWalk, LINE, PTE, and node2vec},
year = {2018},
isbn = {9781450355810},
publisher = {Association for Computing Machinery},
address = {New York, NY, USA},
url = {https://doi.org/10.1145/3159652.3159706},
doi = {10.1145/3159652.3159706},
booktitle = {Proceedings of the Eleventh ACM International Conference on Web Search and Data Mining},
pages = {459–467},
numpages = {9},
location = {Marina Del Rey, CA, USA},
series = {WSDM '18}
}

@inproceedings{VERSE,
author = {Tsitsulin, Anton and Mottin, Davide and Karras, Panagiotis and M\"{u}ller, Emmanuel},
title = {VERSE: Versatile Graph Embeddings from Similarity Measures},
year = {2018},
isbn = {9781450356398},
publisher = {International World Wide Web Conferences Steering Committee},
address = {Republic and Canton of Geneva, CHE},
url = {https://doi.org/10.1145/3178876.3186120},
doi = {10.1145/3178876.3186120},
booktitle = {Proceedings of the 2018 World Wide Web Conference},
pages = {539–548},
numpages = {10},
location = {Lyon, France},
series = {WWW '18}
}

@book{Klenke2014ProbabilityTheory,
  author    = {Achim Klenke},
  title     = {Probability Theory: A Comprehensive Course},
  edition   = {2},
  publisher = {Springer},
  year      = {2014}
}

@article{FastGumbelMax,
  author={Zhang, Yuanming and Wang, Pinghui and Qi, Yiyan and Cheng, Kuankuan and Zhao, Junzhou and Tian, Guangjian and Guan, Xiaohong},
  journal={IEEE Transactions on Knowledge and Data Engineering}, 
  title={Fast Gumbel-Max Sketch and its Applications}, 
  year={2023},
  volume={35},
  number={9},
  pages={9350-9363},
  doi={10.1109/TKDE.2023.3237857}}

@article{TemporalGraph,
author = {C. Coquidé, R. Cazabet},
title = {A Temporal Graph Dataset of Bitcoin Entity-Entity Transactions},
year = {2025},
publisher = {Nature Research},
volume = {12},
number = {337},
issn = {2150-8097},
url = {https://doi.org/10.1038/s41597-025-04595-8},
doi = {10.1038/s41597-025-04595-8},
journal = {Scientific Data},
}

@INPROCEEDINGS{FastAndAccurate,
  author={Gou, Xiangyang and Zou, Lei and Zhao, Chenxingyu and Yang, Tong},
  booktitle={2019 IEEE 35th International Conference on Data Engineering (ICDE)}, 
  title={Fast and Accurate Graph Stream Summarization}, 
  year={2019},
  volume={},
  number={},
  pages={1118-1129},
  doi={10.1109/ICDE.2019.00103}
  }

@misc{law_datasets,
  author       = {Paolo Boldi and Sebastiano Vigna},
  title        = {{Laboratory for Web Algorithmics (LAW) Datasets}},
  howpublished = {\url{https://law.di.unimi.it/datasets.php}},
  year         = {2026},
  note         = {Accessed: 2026-01-26},
  organization = {University of Milan}
}

@article{StreamingGraphEmbeddings,
  author={Yang, Dingqi and Qu, Bingqing and Yang, Jie and Wang, Liang and Cudre-Mauroux, Philippe},
  journal={IEEE Transactions on Knowledge and Data Engineering}, 
  title={Streaming Graph Embeddings via Incremental Neighborhood Sketching}, 
  year={2023},
  volume={35},
  number={5},
  pages={5296-5310},
  doi={10.1109/TKDE.2022.3149999}}

@inproceedings{NodeSketch,
author = {Yang, Dingqi and Rosso, Paolo and Li, Bin and Cudre-Mauroux, Philippe},
title = {NodeSketch: Highly-Efficient Graph Embeddings via Recursive Sketching},
year = {2019},
isbn = {9781450362016},
publisher = {Association for Computing Machinery},
address = {New York, NY, USA},
url = {https://doi.org/10.1145/3292500.3330951},
doi = {10.1145/3292500.3330951},
booktitle = {Proceedings of the 25th ACM SIGKDD International Conference on Knowledge Discovery \& Data Mining},
pages = {1162–1172},
numpages = {11},
location = {Anchorage, AK, USA},
series = {KDD '19}
}

@article{Lemiesz2023,
author = {Lemiesz, Jakub},
title = {Efficient Framework for Operating on Data Sketches},
year = {2023},
issue_date = {April 2023},
publisher = {VLDB Endowment},
volume = {16},
number = {8},
issn = {2150-8097},
url = {https://doi.org/10.14778/3594512.3594526},
doi = {10.14778/3594512.3594526},
journal = {Proc. VLDB Endow.},
month = {apr},
pages = {1967–1978},
numpages = {12}
}

@book{knuth97v2,
  address = {Boston},
  author = {Knuth, Donald E.},
  edition = {Third},
  isbn = {0201896842 9780201896848},
  publisher = {Addison-Wesley},
  refid = {174763889},
  title = {The Art of Computer Programming, Volume 2: Seminumerical Algorithms},
  year = 1997
}

@book{vanderVaart1998,
  author = {A. W. van der Vaart},
  title = {Asymptotic Statistics},
  publisher = {Cambridge University Press},
  year = {1998},
  series = {Cambridge Series in Statistical and Probabilistic Mathematics},
  isbn = {978-0-521-78450-4}
}

@article{Lemiesz21,
	author    = {Jakub Lemiesz},
	title     = {On the algebra of data sketches},
	journal   = {Proc. {VLDB} Endow.},
	volume    = {14},
	number    = {9},
	pages     = {1655--1667},
	year      = {2021},
	url       = {http://www.vldb.org/pvldb/vol14/p1655-lemiesz.pdf},
	doi       = {10.14778/3461535.3461553},
	bibsource = {dblp computer science bibliography, https://dblp.org}
}

@Book{Devr86,
  Title                    = {Non-Uniform Random Variate Generation},
  Author                   = {Luc Devroye},
  Publisher                = {Springer-Verlag},
  Year                     = {1986},

  Address                  = {New York, NY, USA}
}

@article{ModularityWeighted,
  author = {M. E. J. Newman},
  doi = {10.1103/PhysRevE.70.056131},
  journal = {Phys. Rev. E},
  month = nov,
  number = 5,
  numpages = {9},
  pages = 056131,
  publisher = {American Physical Society},
  title = {Analysis of weighted networks},
  volume = 70,
  year = 2004
}

@book{Barabasi2016,
  address = {Cambridge},
  author = {Barabási, Albert-László and Pósfai, Márton},
  isbn = {9781107076266 1107076269},
  publisher = {Cambridge University Press},
  title = {Network science},
  year = 2016
}

@article{newman03fast,
  author = {Newman, M.E.J.},
  journal = {Physical Review E},
  month = {September},
  title = {Fast algorithm for detecting community structure in networks},
  volume = 69,
  year = 2003
}

@article{Leiden,
  title={From Louvain to Leiden: guaranteeing well-connected communities},
  author={Vincent Antonio Traag and Ludo Waltman and Nees Jan van Eck},
  journal={Scientific Reports},
  year={2018},
  volume={9},
 }

@InProceedings{Fiona_2024,
  author =	{Louf, Baptiste and McDiarmid, Colin and Skerman, Fiona},
  title =	{{Modularity and Graph Expansion}},
  booktitle =	{15th Innovations in Theoretical Computer Science Conference (ITCS 2024)},
  pages =	{78:1--78:21},
  series =	{Leibniz International Proceedings in Informatics (LIPIcs)},
  ISBN =	{978-3-95977-309-6},
  ISSN =	{1868-8969},
  year =	{2024},
  volume =	{287},
  editor =	{Guruswami, Venkatesan},
  publisher =	{Schloss Dagstuhl -- Leibniz-Zentrum f{\"u}r Informatik},
  address =	{Dagstuhl, Germany},
  doi =		{10.4230/LIPIcs.ITCS.2024.78},
 }

@article{Greedy2024,
author = {Lanciano, Tommaso and Miyauchi, Atsushi and Fazzone, Adriano and Bonchi, Francesco},
title = {A Survey on the Densest Subgraph Problem and its Variants},
year = {2024},
issue_date = {August 2024},
publisher = {Association for Computing Machinery},
address = {New York, NY, USA},
volume = {56},
number = {8},
issn = {0360-0300},
url = {https://doi.org/10.1145/3653298},
doi = {10.1145/3653298},
journal = {ACM Comput. Surv.},
month = apr,
articleno = {208},
numpages = {40}
}

@article{karger1996,
author = {Karger, David R. and Stein, Clifford},
title = {A new approach to the minimum cut problem},
year = {1996},
issue_date = {July 1996},
publisher = {Association for Computing Machinery},
address = {New York, NY, USA},
volume = {43},
number = {4},
issn = {0004-5411},
url = {https://doi.org/10.1145/234533.234534},
doi = {10.1145/234533.234534},
journal = {J. ACM},
month = jul,
pages = {601–640},
numpages = {40}
}

@article{KNUTH_1978,
title = {The expected linearity of a simple equivalence algorithm},
journal = {Theoretical Computer Science},
volume = {6},
number = {3},
pages = {281-315},
year = {1978},
issn = {0304-3975},
doi = {https://doi.org/10.1016/0304-3975(78)90009-9},
url = {https://www.sciencedirect.com/science/article/pii/0304397578900099},
author = {Donald E. Knuth and Arnold Schönhage}
}

@INPROCEEDINGS {Thota_2021,
author = { Thota, Saigopal and Jain, Mridul and Kamat, Nishad and Malikireddy, Saikiran and Eranti, Pruthvi Raj and Kuruvilla, Albin },
booktitle = { 2021 IEEE International Conference on Big Data (Big Data) },
title = {{ Building Graphs at a Large Scale: Union Find Shuffle }},
year = {2021},
volume = {},
ISSN = {},
pages = {4146-4152},
doi = {10.1109/BigData52589.2021.9671575},
url = {https://doi.ieeecomputersociety.org/10.1109/BigData52589.2021.9671575},
publisher = {IEEE Computer Society},
address = {Los Alamitos, CA, USA},
month =Dec}

\end{document}